\newtheorem{example}{Example}
\newtheorem{definition}{Definition}
\newtheorem{proposition}{Proposition}
\newenvironment{proof}{\hspace{8pt}\ti{Proof:}}{~~~~QED}
\newcommand{\pnt}[1]{{\mbox{$\vec{#1}$}}}
\newcommand{\ppnt}[2]{{\mbox{$\vec{#1}_{#2}$}}}
\newcommand{\bm}[1]{{\mbox{\boldmath $#1$}}}
\newcommand{\cof}[2]{\mbox{$#1_{\vec{#2}}$}}
\newcommand{\V}[1]{\mbox{$\mathit{Vars}(#1)$}}
\newcommand{\s}[1]{\mbox{$\{#1\}$}}
\newcommand{\nGz}[2]{$G_{non-\{z\}}$}
\newcommand{\prr}[1]{\mi{Prev}(\boldsymbol{q})}
\newcommand{\mi}[1]{\mathit{#1}}
\newcommand{\ti}[1]{\textit{#1}}
\newcommand{\tb}[1]{\textbf{#1}}
\newcommand{\ttt}{\>\>\>}
\newcommand{\tttt}{\>\>\>\>}
\newcommand{\Tt}{\>\>}
\newcommand{\Sub}[2]{\mbox{$\mi{#1}_\mi{#2}$}}
\newcommand{\sub}[2]{\mbox{$\vec{#1}_\mi{#2}$}}
\newcommand{\Comment}[1]{}
\newcommand{\sas}{\mbox{$\mi{SemStr}$}\xspace}
\newcommand{\pct}{\ti{GenPCT}\xspace}
\newcommand{\cube}[1]{\mbox{$\mi{Cube}(\pnt{#1})$}}
\newcommand{\Cube}[2]{\mbox{$\mi{Cube}(\ppnt{#1}{#2})$}}
\newcommand{\nbhd}[2]{\mbox{$\mi{Nbhd}(\pnt{#1},#2)$}}
\newcommand{\nnbhd}[3]{\mbox{$\mi{Nbhd}(\ppnt{#1}{#2},#3)$}}
\newcommand{\Nbhd}[3]{\mbox{$\mi{Nbhd}(\sub{#1}{init},\pnt{#2},#3)$}}
\newcommand{\NNbhd}[4]{\mbox{$\mi{Nbhd}(\sub{#1}{init},\ppnt{#2}{#3},#4)$}}
\newcommand{\Nbh}[3]{\mbox{$\mi{Nbhd}(\pnt{#1},\pnt{#2},#3)$}}
\newcommand{\ac}[1]{\mbox{$\Phi(#1)$}}
\newcommand{\Fi}{\mbox{$\Phi$}\xspace}
\newcommand{\acts}{CTS\textsuperscript{a}\xspace}
\newcommand{\aacts}{CTS\textsuperscript{aa}\xspace}
\newcommand{\pim}{\mbox{$\xi(M,M^{\pi})$}\xspace}
\newcommand{\spi}[1]{\mbox{$#1^{\pi}$}}
\begin{document}

%\title{Combining Structural And Semantic Derivations}
\title{Complete Test Sets And Their Approximations}
%\title{Generation Of Complete Test Sets}
%\title{Generation Of Complete Test Sets And Their Approximations}

%% \begin{center}
%%   \Large\bfseries\boldmath
%%   Generation Of Complete Test Sets
%% \end{center}

\author{
\IEEEauthorblockN{Eugene Goldberg}
\IEEEauthorblockN{\emph{eu.goldberg@gmail.com}}}

%% \author{Eugene Goldberg}
%% \institute{\email{eu.goldberg@gmail.com}}
%\author{}
%\institute{}

\maketitle

\begin{abstract}
  We use testing to check if a combinational circuit $N$ always
  evaluates to 0 (written as $N \equiv 0$).  We call a set of tests
  proving $N \equiv 0$ a complete test set (CTS). The conventional
  point of view is that to prove $N \equiv 0$ one has to generate a
  \ti{trivial} CTS. It consists of all $2^{|X|}$ input assignments
  where $X$ is the set of input variables of $N$. We use the notion of
  a Stable Set of Assignments (SSA) to show that one can build a
  \ti{non-trivial} CTS consisting of less than $2^{|X|}$ tests. Given
  an unsatisfiable CNF formula $H(W)$, an SSA of $H$ is a set of
  assignments to $W$ that proves unsatisfiability of $H$. A trivial
  SSA is the set of all $2^{|W|}$ assignments to $W$. Importantly,
  real-life formulas can have non-trivial SSAs that are much smaller
  than $2^{|W|}$.  In general, construction of even non-trivial CTSs
  is inefficient. We describe a much more efficient approach where
  tests are extracted from an SSA built for a ``projection'' of $N$ on
  a subset of variables of $N$.  These tests can be viewed as an
  approximation of a CTS for $N$. We give experimental results and
  describe potential applications of this approach.
 
\end{abstract}

%We describe a
%  practical application of this algorithm to improving a coverage
%  metric called MC/DC used for safety-critical software.

\section{Introduction}
  Testing is an important part of verification flows. For that reason,
  any progress in understanding testing and improving its quality is
  of great importance. In this paper, we consider the following
  problem. Given a single-output combinational circuit $N$, find a set
  of input assignments (tests) proving that $N$ evaluates to 0 for
  every test (written as $N \equiv 0$) or find a counterexample. We
  will call a set of input assignments proving $N \equiv 0$ a
  \ti{complete test set} (\ti{CTS})\footnote{Term CTS is sometimes used to say that a test set invokes every event
specified by a \ti{coverage metric}. Our application of this term is
quite different.
}.  We
  will call the set of all possible tests a \ti{trivial
    CTS}. Typically, one assumes that proving $N \equiv 0$ involves
  derivation of the trivial CTS, which is infeasible in practice.
  Thus, testing is used only for finding an input assignment refuting
  $N \equiv 0$. We present an approach for building a non-trivial CTS
  consisting only of a subset of all possible tests. In general,
  finding even a non-trivial CTS for a large circuit is
  impractical. We describe a much more efficient approach where an
  \ti{approximation} of a CTS is generated.

  The circuit $N$ above usually describes a property $\xi$ of a
  multi-output combinational circuit $M$, the latter being the
  \ti{real object of testing}.  For instance, $\xi$ may state that $M$
  never produces some output assignments. To differentiate CTSs and
  their approximations from conventional test sets verifying $M$ ``as
  a whole'', we will refer to the former as \ti{property-checking test
    sets}.  Let $\Xi :=\s{\xi_1,\dots,\xi_k}$ be the set of properties
  of $M$ formulated by a designer. Assume that every property of $\Xi$
  holds and $T_i$ is a test set generated to check property $\xi_i \in
  \Xi$.  There are at least two reasons why applying $T_i$ to $M$
  makes sense. First, if $\Xi$ is \ti{incomplete}\footnote{That is $M$
    can be incorrect even if all properties of $\Xi$ hold.}, a test of
  $T_i$ can expose a bug, if any, breaking a property of $M$ that is
  not in $\Xi$. Second, if property $\xi_i$ is defined
  \ti{incorrectly}, a test of $T_i$ may expose a bug breaking the
  correct version of $\xi_i$.  On the other hand, if $M$ produces
  proper output assignments for all tests of $T_1 \cup \dots \cup
  T_k$, one gets extra guarantee that $M$ is correct.  In
  Section~\ref{sec:appl}, we list some other applications of
  property-checking test sets such as verification of design changes,
  hitting corner cases and testing sequential circuits.

  Let $N(X,Y,z)$ be a single-output combinational circuit where $X$
  and $Y$ specify the sets of input and internal variables of $N$
  respectively and $z$ specifies the output variable of $N$. Let
  $F_N(X,Y,z)$ be a formula defining the functionality of $N$ (see
  Section~\ref{sec:cts}). We will denote the set of variables of
  circuit $N$ (respectively formula $H$) as \V{N} (respectively
  \V{H}). Every assignment\footnote{By an assignment to a set of variables $V$, we mean a \ti{full}
assignment where every variable of $V$ is assigned a value.
} to \V{F_N}
  satisfying $F_N$ corresponds to a consistent
  assignment\footnote{An assignment to a gate $G$ of $N$ is called consistent if the value
assigned to the output variable of $G$ is implied by values assigned
to its input variables. An assignment to variables of $N$ is called
consistent if it is consistent for every gate of $N$.
} to \V{N} and vice versa. Then
  the problem of proving $N \equiv 0$ reduces to showing that formula
  $F_N \wedge z$ is unsatisfiable.  From now on, we assume that all
  formulas mentioned in this paper are \ti{propositional}. Besides, we
  will assume that every formula is represented in CNF i.e. as a
  conjunction of disjunctions of literals.

  Our approach is based on the notion of a Stable Set of Assignments
  (SSA) introduced in~\cite{ssp}.  Given formula $H(W)$, an SSA of $H$
  is a set $P$ of assignments to variables of $W$ that have two
  properties.  First, every assignment of $P$ falsifies $H$. Second,
  $P$ is a transitive closure of some neighborhood relation between
  assignments (see Section~\ref{sec:ssa}). The fact that $H$ has an
  SSA means that the former is unsatisfiable. Otherwise, an assignment
  satisfying $H$ is generated when building its SSA. If $H$ is
  unsatisfiable, the set of all $2^{|W|}$ assignments is always an SSA
  of $H$ . We will refer to it as \ti{trivial}. Importantly, a
  real-life formula $H$ can have a lot of SSAs whose size is much less
  than $2^{|W|}$. We will refer to them as \ti{non-trivial}.  As we
  show in Section~\ref{sec:ssa}, the fact that $P$ is an SSA of $H$ is
  a \ti{structural} property of the latter. That is this property
  cannot be expressed in terms of the truth table of $H$ (as opposed
  to a \ti{semantic} property of $H$). For that reason, if $P$ is an
  SSA for $H$, it may not be an SSA for some other formula $H'$ that
  is logically equivalent to $H$. In other words, a structural
  property is \ti{formula-specific}.

  We show that a CTS for $N$ can be easily extracted from an SSA of
  formula $F_N \wedge z$. This makes a non-trivial CTS a structural
  property of circuit $N$ that cannot be expressed in terms of its
  truth table.  Building an SSA for a large formula is inefficient.
  So, we present a procedure constructing a simpler formula $H(V)$
  implied by $F_N \wedge z$ $($where $V \subseteq \V{F_N \wedge z})$ and
  building an SSA of $H$.  The existence of such an SSA means that $H$
  (and hence $F_N \wedge z$) is unsatisfiable. So, $N \equiv 0$ holds.
  A test set extracted from an SSA of $H$ can be viewed as a way to
  verify a ``projection'' of $N$ on variables of $V$. On the other
  hand, one can consider this set as an approximation of a CTS for
  $N$.
  We will refer to the procedure above as \sas (``\ti{Sem}antics and
  \ti{Str}ucture''). \sas combines semantic and structural
  derivations, hence the name. The semantic part of \sas
  is\footnote{Implication $F_N \wedge z \rightarrow H$ is a \ti{semantic} property
of $F_N \wedge z$.  To verify this property it suffices to know the
truth table of $F_N \wedge z$.

} to derive $H$. Its structural part
  consists of constructing an SSA of $H$ thus proving that $H$ is
  unsatisfiable.

  The contribution of this paper is fourfold. First, we introduce the
  notion of non-trivial CTSs (Section~\ref{sec:cts}). Second, we
  present a method for efficient construction of property-checking
  tests that are approximations of CTSs (Sections~\ref{sec:algor}
  and~\ref{sec:app_cts}).  Third, we describe applications of such
  tests (Section~\ref{sec:appl}).  Fourth, we give experimental
  results showing the effectiveness of property-checking tests
  (Section~\ref{sec:exper}).

\section{Stable Set Of Assignments}
\label{sec:ssa}
\subsection{Definitions}
We will refer to a disjunction of literals as a \ti{clause}.  Let
\pnt{p}\, be an assignment to a set of variables $V$. Let \pnt{p}\,
falsify a clause $C$.  Denote by {\boldmath \nbhd{p}{C}} the set of
assignments to $V$ satisfying $C$ that are at Hamming distance 1 from
\pnt{p}. (Here \ti{Nbhd} stands for ``Neighborhood''). Thus, the
number of assignments in \nbhd{p}{C} is equal to that of literals in
$C$. Let \pnt{q}\, be another assignment to $V$ (that may be equal to
\pnt{p}). Denote by {\boldmath \Nbh{q}{p}{C}} the subset of
\nbhd{p}{C} consisting only of assignments that are farther away from
\pnt{q} than \pnt{p} (in terms of the Hamming distance).

\begin{example}
  Let $V=\s{v_1,v_2,v_3,v_4}$ and \pnt{p}=0110. We assume that the
  values are listed in \pnt{p} in the order the corresponding
  variables are numbered i.e. \mbox{$v_1=0$}, $v_2=1,v_3=1,v_4=0$. Let $C= v_1
  \vee \overline{v_3}$. (Note that \pnt{p} falsifies $C$.) Then
  \nbhd{p}{C}=\s{\ppnt{p}{1},\ppnt{p}{2}} where \ppnt{p}{1} = 1110 and
  \ppnt{p}{2}=0100. Let \pnt{q} = 0000. Note that \ppnt{p}{2} is
  actually closer to \pnt{q} than \pnt{p}. So
  \Nbh{q}{p}{C}=\s{\ppnt{p}{1}}.
\end{example}
\begin{definition}
  \label{def:ac_fun}
  Let $H$ be a formula\footnote{We use the set of clauses \s{C_1,\dots,C_k} as an alternative
representation of a CNF formula $C_1 \wedge \dots \wedge C_k$.
\vspace{-10pt}

} specified by a set
  of clauses \s{C_1,\dots,C_k}.  Let $P$ =
  \s{\ppnt{p}{1},\dots,\ppnt{p}{m}} be a set of assignments to \V{H}
  such that every $\ppnt{p}{i} \in P$ falsifies $H$.  Let \Fi denote a
  mapping $P \rightarrow H$ where \ac{\ppnt{p}{i}} is a clause $C$ of
  $H$ falsified by \ppnt{p}{i}. We will call \Fi an \tb{AC-mapping}
  where ``AC'' stands for ``Assignment-to-Clause''. We will denote the
  range of \Fi as \ac{P}. (So, a clause $C$ of $H$ is in \ac{P} iff
  there is an assignment $\ppnt{p}{i} \in P$ such that $C =
  \Fi(\ppnt{p}{i})$.)
\end{definition}
\begin{definition}
 \label{def:ssa}
Let $H$ be a formula specified by a set of clauses
\s{C_1,\dots,C_k}. Let $P$ = \s{\ppnt{p}{1},\dots,\ppnt{p}{m}} be a
set of assignments to \V{H}. $P$ is called a \tb{Stable Set of
  Assignments}\footnote{In~\cite{ssp}, the notion of ``uncentered'' SSAs was introduced. The
\label{ftn:ssa}
definition of an uncentered SSA is similar to
Definition~\ref{def:ssa}. The only difference is that one requires that for
every $p_i \in P$,  $\nnbhd{p}{i}{C} \subseteq P$ holds instead of
$\NNbhd{p}{p}{i}{C} \subseteq P$.
} (\tb{SSA}) of $H$ with
\tb{center} $\sub{p}{init} \in P$ if there is an AC-mapping \Fi such
that for every $\ppnt{p}{i}\in P$, $\NNbhd{p}{p}{i}{C} \subseteq P$
holds where $C = \ac{\ppnt{p}{i}}$.
\end{definition}

\begin{example}
 \label{exmp:ssa}
  Let $H$ consist of four clauses: $C_1 = v_1 \vee v_2 \vee v_3$, $C_2
  = \overline{v}_1$, $C_3 = \overline{v}_2$, $C_4 = \overline{v}_3$.
  Let $P =\s{\ppnt{p}{1},\ppnt{p}{2},\ppnt{p}{3},\ppnt{p}{4}}$ where
  $\ppnt{p}{1} = 000$, $\ppnt{p}{2} = 100$, $\ppnt{p}{3} = 010$,
  $\ppnt{p}{4}=001$.  Let \Fi be an AC-mapping specified as
  $\ac{\ppnt{p}{i}} = C_i, i = 1,\dots,4$.  Since $\ppnt{p}{i}$
  falsifies $C_i$, $i=1,\dots,4$,~~\Fi is a correct AC-mapping. $P$ is
  an SSA of $H$ with respect to \Fi and center
  \sub{p}{init}=\ppnt{p}{1}. Indeed,
  \NNbhd{p}{p}{1}{C_1}=\s{\ppnt{p}{2},\ppnt{p}{3},\ppnt{p}{4}} where
  $C_1 = \ac{\ppnt{p}{1}}$ and \NNbhd{p}{p}{i}{C_i} = $\emptyset$,
  where $C_i = \ac{\ppnt{p}{i}}$, $i=2,3,4$. Thus,
  $\mi{Nbhd}(\sub{p}{init},\ppnt{p}{i},\ac{\ppnt{p}{i}}) \subseteq P$,
  $i=1,\dots,4$.
\end{example}
\subsection{SSAs and  satisfiability of a formula}
\label{ssec:ssa_sat}
\begin{proposition}
\label{prop:ssa}
  Formula $H$ is unsatisfiable iff it has an SSA.
\end{proposition}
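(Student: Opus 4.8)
\textit{Proof proposal.}
I would prove the two implications separately.

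The direction ``$H$ unsatisfiable $\Rightarrow$ $H$ has an SSA'' is witnessed by the trivial SSA. Let $P$ be the set of all $2^{|W|}$ assignments to \V{H}. Since $H$ is unsatisfiable, every assignment of $P$ falsifies $H$, so we may fix an AC-mapping \Fi (sending each \ppnt{p}{i} to any clause of $H$ it falsifies) and pick any center $\sub{p}{init} \in P$. For every \ppnt{p}{i} $\in P$ the inclusion demanded by Definition~\ref{def:ssa} holds vacuously, because $P$ already contains every assignment to \V{H}. Hence $P$ is an SSA of $H$.

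For the converse, ``$H$ has an SSA $\Rightarrow$ $H$ unsatisfiable'', I would argue by contradiction. Suppose $P$ is an SSA of $H$ with center $\sub{p}{init}$ and AC-mapping \Fi, and suppose \pnt{s} satisfies $H$. Starting from $\ppnt{q}{0} := \sub{p}{init}$, I would build a sequence $\ppnt{q}{0},\ppnt{q}{1},\ppnt{q}{2},\dots$ of assignments of $P$, each differing from its predecessor in exactly one variable. At step $j$: since $\ppnt{q}{j} \in P$ it falsifies $H$, hence $\ppnt{q}{j} \neq \pnt{s}$ and $\ppnt{q}{j}$ falsifies $C_j := \ac{\ppnt{q}{j}}$. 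As \pnt{s} satisfies $C_j$, some literal of $C_j$ is satisfied by \pnt{s} but (like every literal of $C_j$) falsified by $\ppnt{q}{j}$; let $u_j$ be the variable of that literal, so \pnt{s} and $\ppnt{q}{j}$ disagree on $u_j$. Set $\ppnt{q}{j+1}$ to be $\ppnt{q}{j}$ with $u_j$ reset to the value \pnt{s} assigns it; then $\ppnt{q}{j+1}$ satisfies $C_j$ and is strictly closer to \pnt{s} (in Hamming distance) than $\ppnt{q}{j}$. The crucial point is to verify $\ppnt{q}{j+1} \in \NNbhd{p}{q}{j}{C_j}$, i.e.\ that $\ppnt{q}{j+1}$ is strictly \emph{farther} from the center $\sub{p}{init}$ than $\ppnt{q}{j}$; once this is known, the SSA property (Definition~\ref{def:ssa}) gives $\ppnt{q}{j+1} \in P$ and the construction continues. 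So the sequence extends indefinitely (every $\ppnt{q}{j}$ falsifies $H$, hence differs from \pnt{s}), while the Hamming distance to \pnt{s} strictly decreases at each step and is bounded below by $0$ --- a contradiction. Therefore $H$ has no satisfying assignment.

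The main obstacle is exactly that crucial point: that a step toward \pnt{s} is always a step away from $\sub{p}{init}$. I plan to carry the invariant that $\ppnt{q}{j}$ differs from $\sub{p}{init}$ on precisely the variable set $\{u_0,\dots,u_{j-1}\}$ and agrees with \pnt{s} on every variable of that set, with the base case $\ppnt{q}{0} = \sub{p}{init}$ making both parts hold vacuously. Granting the invariant, the variable $u_j$ picked at step $j$ --- on which $\ppnt{q}{j}$ and \pnt{s} disagree --- cannot belong to $\{u_0,\dots,u_{j-1}\}$, so $\ppnt{q}{j+1}$ differs from $\sub{p}{init}$ on one more variable than $\ppnt{q}{j}$ does, which is precisely ``farther from $\sub{p}{init}$'', and the invariant is restored at $j+1$. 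The remaining facts --- existence of $u_j$, that $\ppnt{q}{j+1}$ satisfies $C_j$, and that $\ppnt{q}{j+1} \in P$ --- follow by unwinding Definitions~\ref{def:ac_fun} and~\ref{def:ssa}, so this invariant bookkeeping is the only part of the proof that needs real care.
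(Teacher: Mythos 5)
Your proof is correct and takes essentially the same route as the paper's: the easy direction is witnessed by the trivial SSA, and the hard direction walks from the center toward a hypothetical satisfying assignment \pnt{s}, using the SSA closure condition to keep every assignment of the walk inside $P$ until a contradiction with the fact that $P$ contains only falsifying assignments. Your explicit invariant (that \ppnt{q}{j} differs from \sub{p}{init} on exactly the already-flipped variables, all of which now agree with \pnt{s}) is precisely the justification for the claim that each step toward \pnt{s} is a step away from the center --- a point the paper's proof asserts only by appeal to the definition of \ti{BuildPath} without spelling it out.
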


The proof\footnote{The proof of Proposition~\ref{prop:ssa} presented
  in report~\cite{cmpl_tst} is inacurate.} is given
Appendix~\ref{app:proofs}.  A similar proposition is proved
in~\cite{ssp} for ``uncentered'' SSAs (see Footnote~\ref{ftn:ssa}).

%
%
% BuildPath
%
\setlength{\intextsep}{5pt}
\begin{wrapfigure}{L}{1.4in}
%\begin{center}
\small
%\normalsize
%\vspace{-10pt}
\begin{tabbing}
aaa\=b\=cc\= dd\= \kill
$\mi{BuildPath}(H,\Fi,\sub{p}{init},\vec{s})$\{\\
\tb{\scriptsize{1}}\>  $\mi{Path} := \mi{nil}$ \\
\tb{\scriptsize{2}}\>  $\vec{p}_1 := \sub{p}{init}$\\
\tb{\scriptsize{3}}\>  $i := 1$ \\
\tb{\scriptsize{4}}\>  while ($\vec{p}_i \neq \vec{s}$) \{\\
\tb{\scriptsize{5}}\Tt   $\mi{Path} := \mi{Extend}(\mi{Path},\vec{p}_i)$ \\
\tb{\scriptsize{6}}\Tt   $C := \ac{\ppnt{p}{i}}$  \\
\tb{\scriptsize{7}}\Tt   $v := \mi{FindVar}(C,\vec{p}_i,\vec{s})$ \\
\tb{\scriptsize{8}}\Tt   $\vec{p}_{i+1} := \mi{FlipVar}(\vec{p}_i,v)$ \\
\tb{\scriptsize{9}}\Tt   $i := i+1$  \} \\
\tb{\scriptsize{10}}\>  return($\mi{Path}$) \}\\
\end{tabbing} 
\vspace{-20pt}
\caption{\ti{BuildPath} procedure}
\label{fig:bld_path}
\end{wrapfigure}

The set of all assignments to \V{H} forms the \ti{trivial} uncentered
SSA of $H$. Example~\ref{exmp:ssa} shows a \ti{non-trivial} SSA. The
fact that formula $H$ has a non-trivial SSA $P$ is its \ti{structural}
property. That is one cannot check whether $P$ is an SSA of $H$ if
only the truth table of $H$ is known. In particular, $P$ may not be an
SSA of a formula $H'$ logically equivalent to $H$.

%
% BuildSSA
%
%\setlength{\intextsep}{5pt}
%\setlength{\textfloatsep}{10pt}
\begin{wrapfigure}{L}{1.6in}
%\begin{center}
\small
%\normalsize
\vspace{-5pt}
\begin{tabbing}
aaa\=bb\=cc\= dd\= \kill
$\mi{BuildSSA}(H)$\{\\
\tb{\scriptsize{1}}\> $E = \emptyset$;  $\Fi := \emptyset$  \\
\tb{\scriptsize{2}}\> $\sub{p}{init} := \mi{PickInitAssgn}(H)$\\
\tb{\scriptsize{3}}\> $Q := \s{\sub{p}{init}}$   \\
\tb{\scriptsize{4}}\>  while ($Q \neq \emptyset$) \{\\
\tb{\scriptsize{5}}\Tt  $\pnt{p} := \mi{PickAssgn}(\mi{Q})$  \\
\tb{\scriptsize{6}}\Tt  $\mi{Q} := \mi{Q} \setminus \s{\pnt{p}}$ \\
\tb{\scriptsize{7}}\Tt  if $(\mi{SatAssgn}(\pnt{p},H))$ \\
\tb{\scriptsize{8}}\ttt   return($\pnt{p},\mi{nil},\mi{nil},\mi{nil}$) \\
\tb{\scriptsize{9}}\Tt $C := \mi{PickFlsCls}(H,\pnt{p})$   \\
\tb{\scriptsize{10}}\Tt $R:= \Nbhd{p}{p}{C} \setminus E$ \\
\tb{\scriptsize{11}}\Tt $Q := Q \cup R$ \\
\tb{\scriptsize{12}}\Tt $E := E \cup \s{\pnt{p}}$ \\
\tb{\scriptsize{13}}\Tt $\Fi := \Fi \cup \s{(\pnt{p},C)}$\} \\
\tb{\scriptsize{14}}\> return($\mi{nil},E,\sub{p}{init},\Fi$) \}\\
\end{tabbing} 
\vspace{-20pt}
\caption{\ti{BuildSSA} procedure}
\vspace{3pt}
\label{fig:bld_ssa}
\end{wrapfigure}

The relation between SSAs and satisfiability can be explained as
follows. Suppose that formula $H$ is satisfiable. Let \sub{p}{init} be
an arbitrary assignment to \V{H} and \pnt{s} be a satisfying
assignment that is the closest to \sub{p}{init} in terms of the
Hamming distance.  Let $P$ be the set of all assignments to \V{H} that
falsify $H$ and \Fi be an AC-mapping from $P$ to $H$.  Then \pnt{s}
can be reached from \sub{p}{init} by procedure \ti{BuildPath} shown in
Figure~\ref{fig:bld_path}.  It generates a sequence of assignments
$\ppnt{p}{1},\dots,\ppnt{p}{i}$ where \ppnt{p}{1} = \sub{p}{init} and
\ppnt{p}{i}=\pnt{s}. First, \ti{BuildPath} checks if current
assignment \ppnt{p}{i} equals \pnt{s}. If so, then \pnt{s} has been
reached.  Otherwise, \ti{BuildPath} uses clause $C=\ac{\ppnt{p}{i}}$
to generate next assignment. Since \pnt{s} satisfies $C$, there is a
variable $v \in \V{C}$ that is assigned differently in \ppnt{p}{i} and
\pnt{s}. \ti{BuildPath} generates a new assignment \ppnt{p}{i+1}
obtained from \ppnt{p}{i} by flipping the value of $v$.

\ti{BuildPath} reaches \pnt{s} in $k$ steps where $k$ is the Hamming
distance between \sub{p}{init} and \pnt{s}.  Importantly,
\ti{BuildPath} reaches \pnt{s} for \ti{any} AC-mapping. Let $P$ be an
SSA of $H$ with respect to center \sub{p}{init} and AC-mapping
\Fi. Then if \ti{BuildPath} starts with \sub{p}{init} and uses \Fi as
an AC-mapping, it can reach only assignments of $P$. Since every
assignment of $P$ falsifies $H$, no satisfying assignment can be
reached.

A procedure for generation of SSAs called \ti{BuildSSA} is shown in
Figure~\ref{fig:bld_ssa}. It accepts formula $H$ and outputs either a
satisfying assignment or an SSA of $H$, center \sub{p}{init} and
AC-mapping \Fi. \ti{BuildSSA} maintains two sets of assignments
denoted as $E$ and $Q$.  Set $E$ contains the examined assignments
i.e. those whose neighborhood is already explored.  Set $Q$ specifies
assignments that are queued to be examined. $Q$ is initialized with an
assignment \sub{p}{init} and $E$ is originally empty. \ti{BuildSSA}
updates $E$ and $Q$ in a \ti{while} loop. First, \ti{BuildSSA} picks
an assignment \pnt{p} of $Q$ and checks if it satisfies $H$. If so,
\pnt{p} is returned as a satisfying assignment. Otherwise,
\ti{BuildSSA} removes \pnt{p}~\,from $Q$ and picks a clause $C$ of $H$
falsified by \pnt{p}. The assignments of $\Nbhd{p}{p}{C}$ that are not
in $E$ are added to $Q$. After that, \pnt{p} is added to $E$ as an
examined assignment, pair $(\pnt{p},C)$ is added to \Fi and a new
iteration begins. If $Q$ is empty, $E$ is an SSA with center
\sub{p}{init} and AC-mapping \Fi.

\section{Complete Test Sets}
\label{sec:cts}
\setlength{\intextsep}{4pt}
\begin{wrapfigure}{L}{1.7in}
%\begin{figure} 
 \begin{center}
    \includegraphics[width=1.6in,height=2.1in]{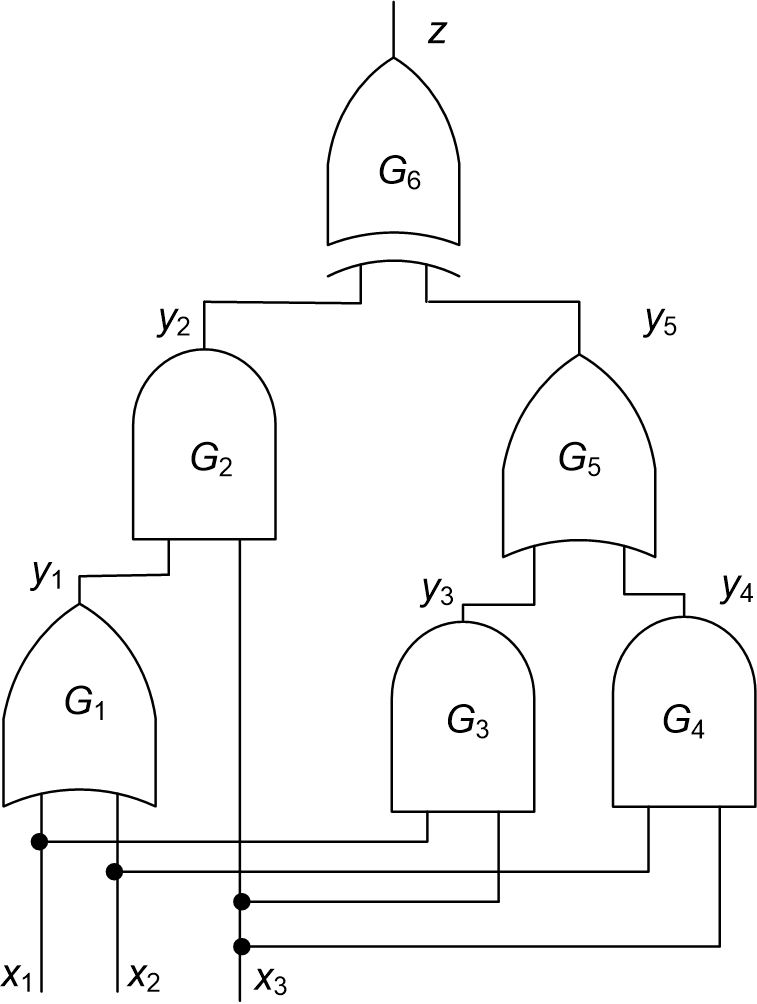}
  \end{center}
\vspace{-10pt}
\caption{Example of circuit $N(X,Y,z)$}
\vspace{5pt}
\label{fig:miter}
%\end{figure}
\end{wrapfigure}

Let $N(X,Y,z)$ be a single-output combinational circuit where $X$ and
$Y$ specify the input and internal variables of $N$ respectively and
$z$ specifies the output variable of $N$. Let $N$ consist of gates
$G_1,\dots,G_k$.  Then $N$ can be represented as $F_N = F_{G_1} \wedge
\dots \wedge F_{G_k}$ where $F_{G_i},i=1,\dots,k$ is a CNF formula
specifying the consistent assignments of gate $G_i$. Proving $N \equiv
0$ reduces to showing that formula $F_N \wedge z$ is unsatisfiable.
\begin{example}
  \label{exmp:circ}
 Circuit $N$ shown in Figure~\ref{fig:miter} represents equivalence
 checking of expressions $(x_1 \vee x_2) \wedge x_3$ and $(x_1 \wedge
 x_3) \vee (x_2 \wedge x_3)$ specified by gates $G_1,G_2$ and
 $G_3,G_4,G_5$ respectively. Formula $F_N$ is equal to $F_{G_1} \wedge
 \dots \wedge F_{G_6}$ where, for instance, $F_{G_1} = C_1 \wedge C_2
 \wedge C_3$, $C_1 = x_1 \vee x_2 \vee \overline{y}_1$, $C_2 =
 \overline{x}_1 \vee y_1$, $C_3 = \overline{x}_2 \vee y_1$.  Every
 satisfying assignment to \V{F_{G_1}} corresponds to a consistent
 assignment to gate $G_1$ and vice versa. For instance,
 $(x_1=0,x_2=0,y_1=0)$ satisfies $F_{G_1}$ and is a consistent
 assignment to $G_1$ since the latter is an OR gate. Formula $F_N
 \wedge z$ is unsatisfiable due to functional equivalence of
 expressions $(x_1 \vee x_2) \wedge x_3$ and $(x_1 \wedge x_3) \vee
 (x_2 \wedge x_3)$. Thus, $N \equiv 0$.
\end{example}

Let \pnt{x} be a test i.e. an assignment to $X$.  The set of
assignments to \V{N} sharing the same assignment \pnt{x} to $X$ forms
a cube of $2^{|Y|+1}$ assignments. $($Recall that $\V{N} = X \cup Y
\cup \s{z}).$ Denote this set as \cube{x}. Only one assignment of
\cube{x} specifies the correct execution trace produced by $N$ under
\pnt{x}.  All other assignments can be viewed as ``erroneous'' traces
under test \pnt{x}.
%

%
% SemStr
%
%\setlength{\intextsep}{5pt}
%\setlength{\textfloatsep}{10pt}
%\begin{figure}
\begin{wrapfigure}{L}{1.6in}
%\begin{center}
\small
%\normalsize
\vspace{-5pt}
\begin{tabbing}
aa\=bb\=cc\= dd\= \kill
$\mi{SemStr}(G,V)$\{\\
\tb{\scriptsize{1}}\> $H := \emptyset$ \\
\tb{\scriptsize{2}}\> while (\ti{true}) \{ \\
\tb{\scriptsize{3}}\Tt $(\pnt{v},\!P)\!:=\!\mi{BuildSSA}(H)$\\
\tb{\scriptsize{4}}\Tt if ($P \neq \mi{nil}$)  \\
\tb{\scriptsize{5}}\ttt return($\mi{nil},P$)\\
\tb{\scriptsize{6}}\Tt  $(C,\pnt{s}) := \mi{GenCls}(G,V,\pnt{v})$ \\
\tb{\scriptsize{7}}\Tt  if ($\pnt{s} \neq \mi{nil}$) \\
\tb{\scriptsize{8}}\ttt return($\pnt{s},\mi{nil}$)\\
\tb{\scriptsize{9}}\Tt  $H := H \cup \s{C}$ \} \\
\end{tabbing} 
\vspace{-20pt}
\caption{\ti{SemStr} procedure}
\label{fig:sem_str}
\end{wrapfigure}
%\end{figure}

\begin{definition}
\label{def:cts}
  Let $T$ be a set of tests \s{\ppnt{x}{1},\dots,\ppnt{x}{k}} where $k
  \leq 2^{|X|}$.  We will say that $T$ is a \tb{Complete Test Set
    (CTS)} for $N$ if $\Cube{x}{1} \cup \dots \cup \Cube{x}{k}$
  contains an SSA for formula $F_N \wedge z$.
\end{definition}

If $T$ satisfies Definition~\ref{def:cts}, set $\Cube{x}{1} \cup \dots
\cup \Cube{x}{k}$ ``contains'' a proof that $N \equiv 0$ and so $T$
can be viewed as complete. If $k = 2^{|X|}$, $T$ is the \ti{trivial}
CTS. In this case, $\Cube{x}{1} \cup \dots \cup \Cube{x}{k}$ contains
the trivial SSA consisting of all assignments to \V{F_N \wedge
  z}. Given an SSA $P$ of $F_N \wedge z$, one can easily generate a
CTS by extracting all different assignments to $X$ that are present in
the assignments of $P$.

\begin{example}
  Formula $F_N \wedge z$ of Example~\ref{exmp:circ} has an~SSA of 21
  assignments to \V{F_N\!\wedge\!z}. They have only~5 different
  assignments to\,\,$X\!=\!\s{x_1,\!x_2,\!x_3}$.  The set
  $\{101,\!100,\!011,\!010,\!000\}$ of\,\,those assignments\,is a CTS for $N$.
\end{example}

Definition~\ref{def:cts} is meant for circuits that are not ``too
redundant''.  Highly-redundant circuits are discussed in report
\cite{cmpl_tst} and Appendix~\ref{app:red}.

\section{\sas Procedure}
\label{sec:algor}
\subsection{Motivation}

  Building an SSA for a large formula is inefficient. So, constructing
  a CTS of $N$ from an SSA of $F_N \wedge z$ is impractical. To
  address this problem, we introduce a procedure called \sas (a short
  for ``Semantics and Structure''). Given formula $F_N \wedge z$ and a
  set of variables $V \subseteq \V{F_N \wedge z}$, \sas generates a
  simpler formula $H(V)$ implied by $F_N \wedge z$ at the same time
  trying to build an SSA for $H$. If \sas succeeds in constructing
  such an SSA, formula $H$ is unsatisfiable and so is $F_N \wedge z$.
  Then a set of tests $T$ is extracted from this SSA. As we show in
  Subsection~\ref{ssec:approx}, one can view $T$ as an approximation
  of a CTS for $N$ (if $X \subseteq V$) or an ``approximation of
  approximation'' of a CTS (if $X \not\subseteq V$).

  \begin{example}
  Consider the circuit $N$ of Figure~\ref{fig:miter} where
  $X=\s{x_1,x_2,x_3}$.  Assume that $V = X$. Application of \sas to
  $F_N \wedge z$ produces $H(X)= (\overline{x}_1 \vee \overline{x}_3)
  \wedge (\overline{x}_2 \vee \overline{x}_3) \wedge (x_1 \vee x_2)
  \wedge x_3$. \sas also generates an SSA of $H$ of four assignments
  to $X$: \s{000,001,011,101} with center \sub{p}{init}=000. (We omit
  the AC-mapping here.) These assignments form an approximation of CTS
  for $N$.
\end{example}
  \subsection{Description of \sas}
The pseudocode of \sas is shown in Figure~\ref{fig:sem_str}.  \sas
accepts formula $G$ (in our case, $G := F_N \wedge z$) and a set of
variables $V \subseteq \V{G}$. \sas outputs an assignment satisfying
$G$ or formula $H(V)$ implied by $G$ and an SSA of $H$.  Originally,
the set of clauses $H$ is empty. $H$ is computed in a \ti{while}
loop. First, \sas tries to build an SSA for the current formula $H$ by
calling \ti{BuildSSA} (line 3).  If $H$ is unsatisfiable,
\ti{BuildSSA} computes an SSA $P$ returned by \sas (line
5). Otherwise, \ti{BuildSSA} returns an assignment \pnt{v} satisfying
$H$. In this case, \sas calls procedure \ti{GenCls} to build a clause
$C$ falsified by \pnt{v}.  Clause $C$ is obtained by resolving clauses
of $G$ on variables of $W$. (Hence $C$ is implied by $G$.)  If \pnt{v}
can be extended to an assignment \pnt{s} satisfying $G$, \sas
terminates (lines 7-8). Otherwise, $C$ is added to $H$ and a new
iteration begins.

%
% Gen Clause
%
%\setlength{\intextsep}{5pt}
%\setlength{\textfloatsep}{10pt}
%\begin{figure}
\begin{wrapfigure}{L}{1.2in}
%\begin{center}
\small
%\normalsize
%\vspace{-5pt}
\begin{tabbing}
aa\=bb\=cc\= dd\= \kill
$\mi{GenCls}(G,V,\pnt{v})$\{\\
\tb{\scriptsize{1}}\> $\cof{G}{v} := \mi{GenForm}(F,\pnt{v})$ \\
\tb{\scriptsize{2}}\> $(\pnt{s},R) := \mi{ChkSat}(\cof{G}{v})$\\
\tb{\scriptsize{3}}\> if ($\pnt{s} \neq \mi{nil}$) \\
\tb{\scriptsize{4}}\Tt return($\mi{nil},\pnt{s} \cup \pnt{v}$)\\
\tb{\scriptsize{5}}\> $V' := \mi{Analyze}(R,\cof{G}{v},G)$\\
\tb{\scriptsize{6}}\> $C := \mi{FormCls}(V',\pnt{v})$ \\
\tb{\scriptsize{7}}\>  return($C,\mi{nil}$)\\
\end{tabbing} 
\vspace{-15pt}
\caption{\ti{GenCls} procedure}
%\vspace{5pt}
\label{fig:gen_cls}
\end{wrapfigure}
%\end{figure}

Procedure \ti{GenCls} is shown in Figure~\ref{fig:gen_cls}. First,
\ti{GenCls} generates formula \cof{G}{v} obtained from $G$ by
discarding clauses satisfied by \pnt{v} and removing literals
falsified by \pnt{v}. Then \ti{GenCls} checks if there is an
assignment \pnt{s} satisfying \cof{G}{v}.  If so, $\pnt{s} \cup
\pnt{v}$ is returned as an assignment satisfying $G$. Otherwise, a
proof $R$ of unsatisfiability of \cof{G}{v} is produced.  Then
\ti{GenCls} forms a set $V' \subseteq V$. A variable $w$ is in $V'$
iff a clause of \cof{G}{v} is used in proof $R$ and its parent clause
from $G$ has a literal of $w$ falsified by \pnt{v}. Finally, clause
$C$ is generated as a disjunction of literals of $V'$ falsified by
\pnt{v}. By construction, clause $C$ is implied by $G$ and falsified
by \pnt{v}.

%
%\raggedbottom
%\nopagebreak[4]
%\vspace{-20pt}

\section{Building Approximations Of CTS}
\label{sec:app_cts}
%
% Subsection
%
\subsection{Two kinds of approximations of CTSs}
\label{ssec:approx}
%
% Generate tests
%
%\setlength{\intextsep}{5pt}
%\setlength{\textfloatsep}{10pt}
%\begin{figure}
\begin{wrapfigure}{L}{1.5in}
%\begin{center}
\small
%\normalsize
%\vspace{-5pt}
\begin{tabbing}
aa\=bb\=cc\= dd\= \kill
$\mi{GenTests}(F_N,X,P,\mi{Tries})$\{\\
\tb{\scriptsize{1}}\>  $\mi{T} := \emptyset$ \\
\tb{\scriptsize{2}}\> for each $\pnt{v} \in P$ \{\\
\tb{\scriptsize{3}}\Tt  $\pnt{s} := \mi{SatAssgn}(F_N,\pnt{v})$\\
\tb{\scriptsize{4}}\Tt  if ($\pnt{s} \neq \mi{nil}$) \{\\
\tb{\scriptsize{5}}\ttt   $\pnt{x} := \mi{ExtrTest}(\pnt{s},X)$ \\
\tb{\scriptsize{6}}\ttt  $T:= T \cup \pnt{x}$\} \\
\tb{\scriptsize{7}}\Tt  else \\
\tb{\scriptsize{8}}\ttt for $(i=0;\!i < \mi{Tries};\!i\verb!++!)$\!\{ \\
\tb{\scriptsize{9}}\tttt   $F^*_N := \mi{Relax}(F_N)$ \\
\tb{\scriptsize{10}}\tttt   $\pnt{s} := \mi{SatAssgn}(F^*_N,\pnt{v})$ \\
\tb{\scriptsize{11}}\tttt   if ($\pnt{s} = \mi{nil}$) continue\\
\tb{\scriptsize{12}}\tttt  $\pnt{x} := \mi{ExtrTest}(\pnt{s},X)$\} \\
\tb{\scriptsize{13}}\tttt  $T:= T \cup \pnt{x}$\}\} \\
\tb{\scriptsize{14}}\>~ return($T$)\}\\
%% \tb{\scriptsize{14}}\ttt   \\
\end{tabbing} 
\vspace{-20pt}
\caption{\ti{GenTests} procedure}
%\vspace{5pt}
\label{fig:gen_tests}
\end{wrapfigure}
%\end{figure}

%
As before, let $H(V)$ denote a formula implied by $F_N \wedge z$ that
is generated by \sas and $P$ denote an SSA for $H$.  Projections of
$N$ can be of two kinds depending on whether $X \subseteq V$
holds. Let $X \subseteq V$ hold and $T$ be the test set extracted from
$P$ as described in Section~\ref{sec:cts}. That is $T$ consists of all
different assignments to $X$ present in the assignments of $P$.  On
one hand, using the reasoning of Section~\ref{sec:cts} one can show
that $T$ is a CTS for projection of $N$ on $V$. On the other hand,
since $H(V)$ is essentially an abstraction of $F_N \wedge z$, set $T$
is an approximation of a CTS for $N$. For that reason, we will refer
to $T$ as a \tb{CTS\textsuperscript{a}} of $N$ where superscript ``a''
stands for ``approximation''.

Now assume that $X \not\subseteq V$ holds. Generation of a test set
$T$ from $P$ for this case is described in the next section. The set
$T$ can be viewed as an approximation of a set $T'$ built for
projection of $N$ on set $V \cup X$. Since $T'$ is a \acts for $N$, we
will refer to $T$ as \tb{CTS\textsuperscript{aa}} where ``aa'' stands
for ``approximation of approximation''.

%
% Subsection
%
\subsection{Construction of \aacts}

Consider extraction of a test set $T$ from SSA $P$ of formula $H(V)$
when $X \not\subseteq V$. Since $V$, in general, contains internal
variables\footnote{If the special case $V \subset X$ holds, every
  assignment of $P$ can be easily turned into a test by assigning
  values to variables of $X \setminus V$ (e.g. randomly).}
of $N$, translation of $P$ to a test set $T$ needs a special procedure
\ti{GenTests} shown in Figure~\ref{fig:gen_tests}. For every
assignment \pnt{v} of $P$, \ti{GenTests} checks if formula $F_N$ is
satisfiable under assignment \pnt{v} (i.e. if there exists a test
under which $N$ assigns \pnt{v} to $V$). If so, an assignment \pnt{x}
to $X$ is extracted from the satisfying assignment and added to $T$ as
a test. Otherwise, \ti{GenTests} runs a \ti{for} loop (lines 8-13) of
$\mi{Tries}$ iterations. In every iteration, \ti{GenTests} relaxes
$F_N$ by removing the clauses specifying a small subset of gates
picked randomly.  If the relaxed version of $F_N$ is satisfiable, a
test is extracted from the satisfying assignment and added to $T$.
%
% Subsection
%
\subsection{Finding a set of variables to project on}
\label{ssec:int_cut}
%
%
% Internal cut
%
%\setlength{\intextsep}{5pt}
%\setlength{\textfloatsep}{10pt}
%\begin{figure}
\begin{wrapfigure}{l}{1.5in}
%\begin{center}
\small
%\normalsize
%\vspace{-5pt}
\begin{tabbing}
aa\=bb\=cc\=dd\= \kill
$\mi{GenCut}(N,\mi{Size})$\{\\
\tb{\scriptsize{1}}\> $\Sub{G}{out} := \mi{OutGate}(N)$ \\
\tb{\scriptsize{2}}\>$\mi{Gts}:= \s{\Sub{G}{out}}$ \\
\tb{\scriptsize{3}}\> $Dpth(\Sub{G}{out}):= 0$ \\
\tb{\scriptsize{4}}\> $\mi{Inps}:= \emptyset$ \\
\tb{\scriptsize{5}}\> while $(|\mi{Gts}\!\cup\!\mi{Inps}|\!<\!\mi{Size})$ \{\\
\tb{\scriptsize{6}}\Tt  $G\!:=\!\mi{MinDepth}(\mi{Gts},\!\mi{Dpth})$\\
\tb{\scriptsize{7}}\Tt  $\mi{Gts} := \mi{Gts} \setminus \s{G}$ \\
\tb{\scriptsize{8}}\Tt  $\mi{Seen}(G) := \mi{true}$ \\
\tb{\scriptsize{9}}\Tt foreach $G' \in \mi{FanIn}(G) \{$\\
\tb{\scriptsize{10}}\ttt  if $(\mi{Seen}(G'))$ continue \\
\tb{\scriptsize{11}}\ttt  if $(G' \in \mi{Inputs}(N))$ \{ \\
\tb{\scriptsize{12}}\tttt   $\mi{Inps} = \mi{Inps} \cup \s{G'}$ \\
\tb{\scriptsize{13}}\tttt   continue \}\\
\tb{\scriptsize{14}}\ttt $\mi{Dpth}(G')\!:=\!\mi{Dpth}(G)\!+\!1$ \\
\tb{\scriptsize{15}}\ttt $\mi{Gts} := \mi{Gts} \cup \s{G'}$\}\} \\
\tb{\scriptsize{16}}\>~return($\mi{Gts} \cup \mi{Inps}$)\} \\
\end{tabbing} 
\vspace{-20pt}
\caption{\ti{GenCut} procedure}
%\vspace{5pt}
\label{fig:int_cut}
\end{wrapfigure}
%\end{figure}

%
Intuitively, a good choice of the set $V$ to project $N$ on is a
(small) coherent subset of variables of $N$ reflecting its structure
and/or semantics. One obvious choice of $V$ is the set $X$ of input
variables of $N$.  In this section, we describe generation of a set
$V$ whose variables form an internal cut of $N$ denoted as \ti{Cut}.
Procedure \ti{GenCut} for generation of set \ti{Cut} consisting of
\ti{Size} gates is shown in Figure~\ref{fig:int_cut}. Set $V$ is
formed from output variables of the cut gates.

The current cut is specified by $\mi{Gts} \cup \mi{Inps}$.  Set
\ti{Gts} is initialized with the output gate \Sub{G}{out} of circuit
$N$ and \ti{Inps} is originally empty. \ti{GenCut} computes the
\ti{depth} of every gate of \ti{Gts}. The depth of \Sub{G}{out} is set
to 0. Set \ti{Gts} is processed in a \ti{while} loop (lines 5-15). In
every iteration, a gate of the smallest depth is picked from
\ti{Gts}. Then \ti{GenCut} removes gate $G$ from \ti{Gts} and examines
the fan-in gates of $G$ (lines 9-15). Let $G'$ be a fan-in gate of $G$
that has not been seen yet and is not a primary input of $N$. Then the
depth of $G'$ is set to that of $G$ plus 1 and $G'$ is added to
\ti{Gts}. If $G'$ is a primary input of $N$ it is added to \ti{Inps}.

\section{Applications Of Property-Checking Tests}
\label{sec:appl}
Given a multi-output circuit $M$, traditional testing is used to
verify $M$ ``as a whole''. In this paper, we describe generation of a
test set meant for checking a \ti{particular property} of $M$
specified by a single-output circuit $N$. In this section, we present
some applications of property-checking test sets.

%\subsection{Re-using property-checking tests for similar circuits}
\subsection{Testing properties specified by similar circuits}
\label{ssec:des_changes}

Let $N$ be a single-output circuit and $T$ be a test set generated
when proving $N\equiv0$. Let $N^*$ be a circuit that is similar to
$N$.  (For instance, $N$ can specify a property of a circuit $M$
whereas $N^*$ specifies the same property after a modification of
$M$.)  Then one can use $T$ to verify if $N^* \equiv 0$.  Since $T$ is
generated for a similar circuit $N$, there is a good chance that it
contains a counterexample to $N^* \equiv 0$, if any. (Of course, the
fact that $N^*$ evaluates to 0 for all tests of $T$ does not mean that
$N^* \equiv 0$ even if $T$ is a CTS for $N$). In
Subsection~\ref{ssec:bug}, we give experimental evidence supporting
the observation above.

Assuming that $N \equiv 0$ was proved formally, checking if $N^*
\equiv 0$ holds can be verified formally too. So applying tests of $T$
to $N^*$ can be viewed as a ``light'' verification procedure for
exposing bugs. On the other hand, one can re-use test $T$ in
situations where the necessity to apply a formal tool is overlooked or
formal methods are not powerful enough. Let $N$ specify a property
$\xi$ of a \ti{component} of a design $D$. Suppose that this component
is modified under assumption that preserving $\xi$ is not necessary
any more.  By applying $T$ to $D$ one can invoke behaviors that break
$\xi$ and expose a bug in $D$, if any, caused by ignoring $\xi$. If
$D$ is a large design, finding such a bug by formal verification may
not be possible.

%\clearpage
%
% Subsection: verification of corner cases
%
\subsection{Verification of corner cases}
\label{ssec:corners}

\setlength{\intextsep}{4pt}
\begin{wrapfigure}{L}{1.15in}
%\begin{figure} 
 \begin{center}
    \includegraphics[width=1in]{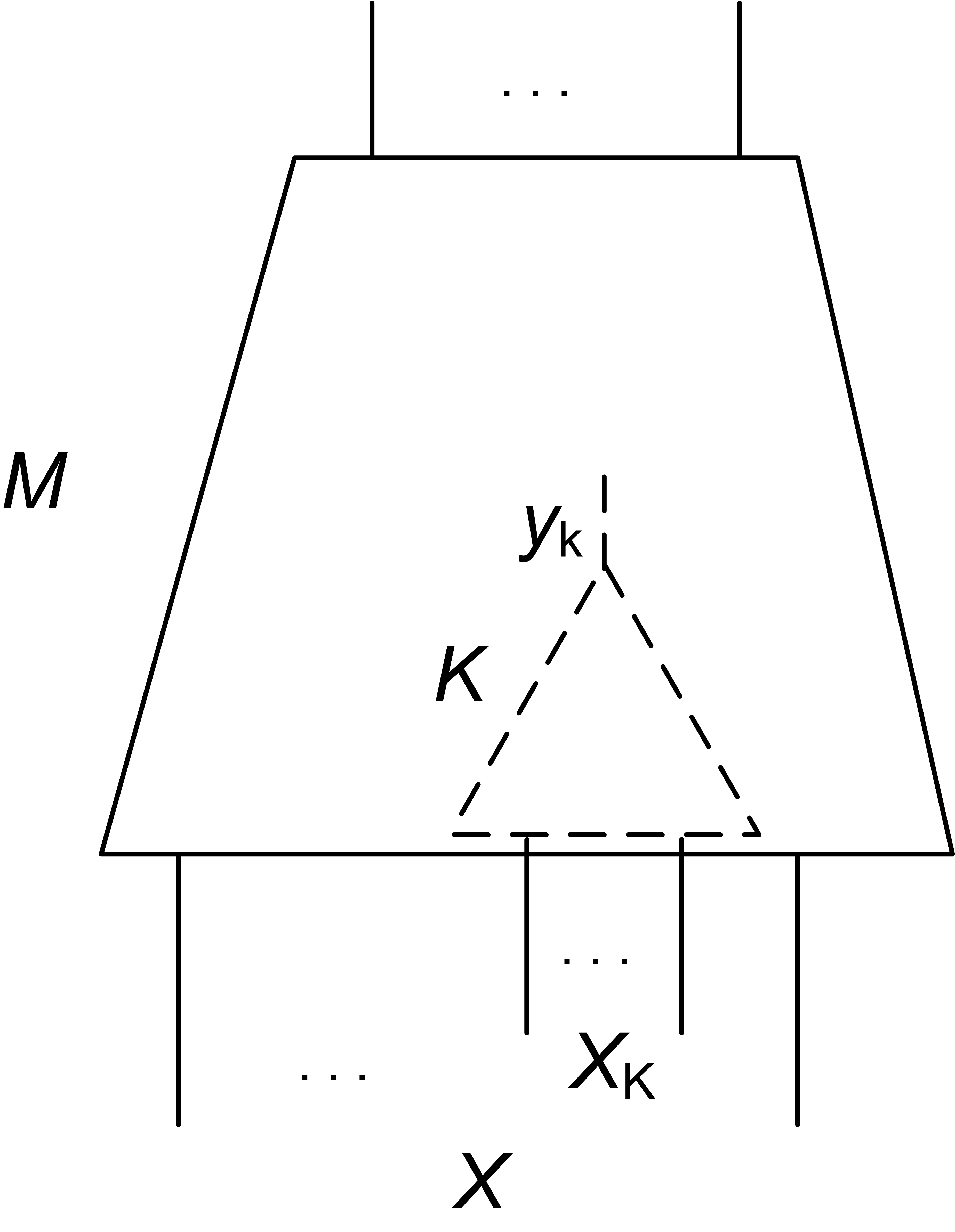}
  \end{center}
\vspace{-10pt}
\caption{Subcircuit $K$ of circuit $M$}
\vspace{10pt}
\label{fig:subcirc}
%\end{figure}
\end{wrapfigure}

Let $K$ be a single-output subcircuit of circuit $M$ as shown in
Figure~\ref{fig:subcirc}. For the sake of simplicity we consider here
the case where the set $X_K$ of input variables of $K$ is a subset of
the set $X$ of input variables of $M$. (The technique below can also
be applied when input variables of $K$ are \ti{internal} variables of
$M$.)  Suppose $K$ evaluates, say, to value 0 much more frequently then
to 1. Then one can view an input assignment of $M$ for which $K$
evaluates to 1 as specifying a ``corner case'' i.e. a rare
event. Hitting such a corner case by a random test can be very hard.
This issue can be addressed by using a coverage metric that
\ti{requires} setting the value of $K$ to both 0 and 1.  (The task of
finding a test for which $K$ evaluates to 1 can be solved, for
instance, by a SAT-solver.)  The problem however is that hitting a
corner case only once may be insufficient.

One can increase the frequency of hitting the corner case above as
follows.  Let $N$ be a miter of circuits $K'$ and $K''$ (see
Figure~\ref{fig:gen_miter}) i.e.  a circuit that evaluates to 1 iff
$K'$ and $K''$ are functionally inequivalent.  Let $K'$ and $K''$ be
two copies of circuit $K$. So $N \equiv 0$ holds. Let test set $T_K$
be extracted from an SSA built for a projection of $N$ on a set $V
\subseteq \V{N}$. Set $T_K$ can be viewed as a result of ``squeezing''
the truth table of $K$. Since this truth table is dominated by input
assignments for which $K$ evaluates to 0, this part of the truth table
is \ti{reduced the most}. So, one can expect that the ratio of tests
of $T_K$ for which $K$ evaluates to 1 is higher than in the truth
table of $K$. In Subsection~\ref{ssec:ecorners}, we substantiate this
intuition experimentally. One can easily extend an assignment
\ppnt{x}{K} of $T_K$ to an assignment \pnt{x} to $X$ e.g. by randomly
assigning values to the variables of $X \setminus X_K$.

\subsection{Dealing with incomplete specifications}
\label{ssec:incomp_spec}
One can use property-checking tests to mitigate the problem of
incomplete specifications.  By running tests generated for an
incomplete set of properties of $M$, one can expose bugs overlooked
due to missing some properties. An important special case of this
problem is as follows. Let $\xi$ be a property of $M$ that
holds. Assume that the correctness of $M$ requires proving a slightly
\ti{different} property $\xi'$ that is not true. By running a test set
$T$ built for property $\xi$, one may expose a bug overlooked in
formal verification due to proving $\xi$ instead of $\xi'$.  In
Subsection~\ref{ssec:missed_props}, we illustrate the idea above
experimentally.

\subsection{Testing sequential circuits}
\label{ssec:seq_circ}
There are a few ways to apply property-checking tests meant for
combinational circuits to verification of \ti{sequential} circuits.
Here is one of them based on bounded model checking~\cite{bmc}. Let
$M$ be a sequential circuit and $\xi$ be a property of $M$. Let
$N(X,Y,z)$ be a circuit such that $N \equiv 0$ holds iff $\xi$ is true
for $k$ time frames. Circuit $N$ is obtained by unrolling $M$ $k$
times and adding logic specifying property $\xi$. Set $X$ consists of
the subset $X'$ specifying the state variables of $M$ in the first
time frame and subset $X''$ specifying the combinational input
variables of $M$ in $k$ time frames.

\begin{wrapfigure}{L}{1.45in}
%\begin{figure} 
 \begin{center}
    \includegraphics[width=1.3in]{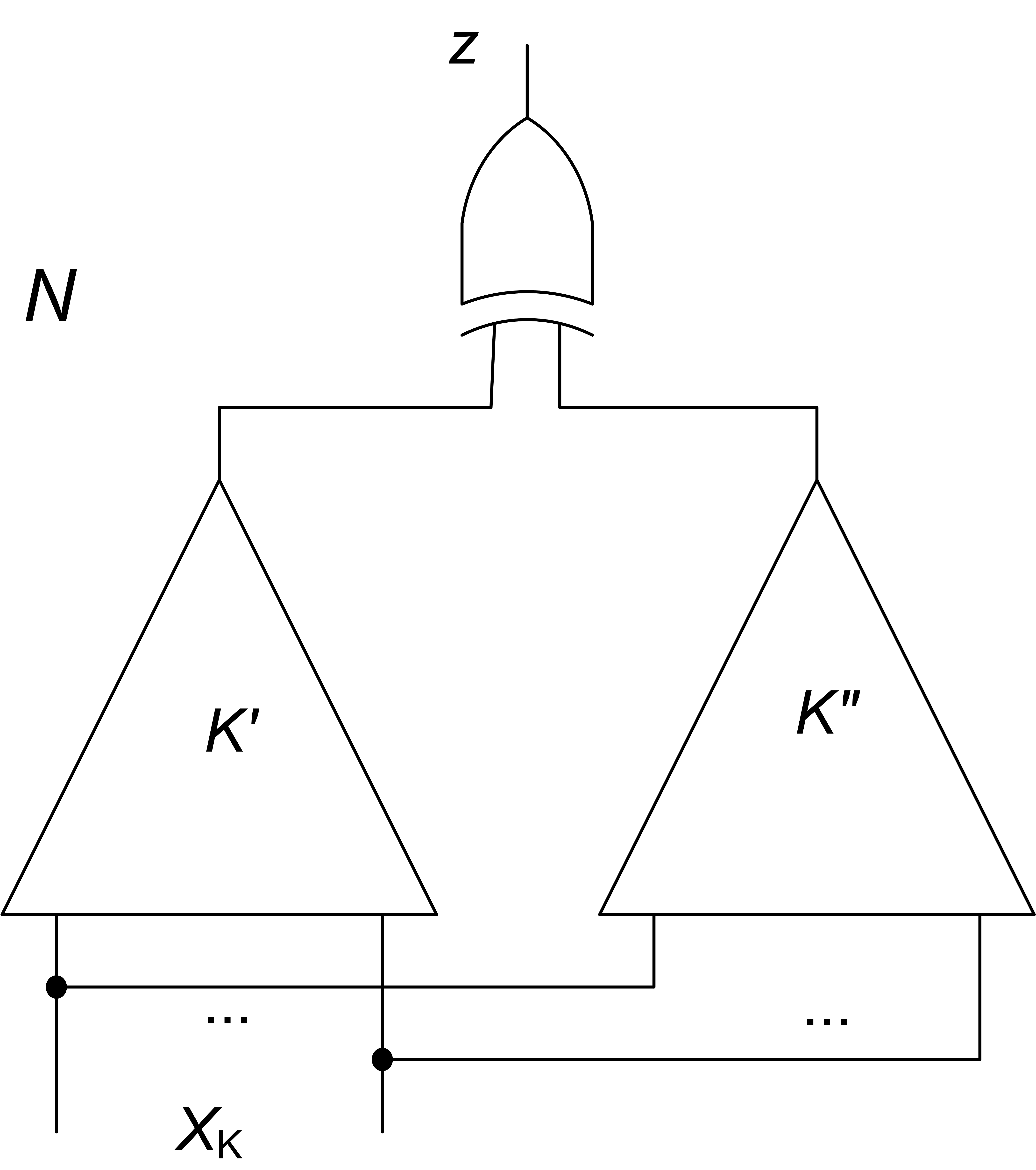}
  \end{center}
\vspace{-10pt}
\caption{The miter of circuits $K'$ and $K''$}
%\vspace{-10pt}
\label{fig:gen_miter}
%\end{figure}
\end{wrapfigure}

Having constructed $N$, one can build CTSs, \acts{s} and \aacts{s} for
testing property $\xi$ of $M$. The only difference here from the
problem we have considered so far is as follows. Circuit $M$ starts in
a state satisfying some formula $I(X')$ that specifies the initial
states. So, one needs to check if $N \equiv 0$ holds only for the
assignments to $X$ satisfying $I(X')$. A test here is an assignment
$(\ppnt{x'}{1},\ppnt{x''}{1},\dots,\ppnt{x''}{k})$ where \ppnt{x'}{1}
is an initial state and \ppnt{x''}{i}, $1 \leq i \leq k$ is an
assignment to the combinational input variables of $i$-th time frame.
Given a test, one can easily compute the corresponding sequence of
states $(\ppnt{x'}{1},\dots,\ppnt{x'}{k})$ of $M$.  In
Subsection~\ref{ssec:missed_props}, we give an example of building an
\aacts for a sequential circuit.

%\vspace{10pt}
%\pagebreak
\section{Experiments}
\label{sec:exper}
%
% Gen property-checking tests
\begin{wrapfigure}{L}{1.4in}
%\begin{center}
\small
%\normalsize
%\vspace{-5pt}
\begin{tabbing}
aa\=bb\=cc\= dd\= \kill
$\mi{GenPCT}(N,\mi{Tries})$\{\\
\tb{\scriptsize{1}}\> $F_N\wedge z:= \mi{GenForm}(N)$\\
\tb{\scriptsize{2}}\> $V := \mi{GenVars}(F_N \wedge z)$\\
\tb{\scriptsize{3}}\> $P := \mi{SemStr}(F_N \wedge z, V)$\\
\tb{\scriptsize{4}}\> $P := \mi{Diversify}(P)$\\
\tb{\scriptsize{5}}\> $T := \mi{BldTests}(F_N,P,\mi{Tries})$ \\
\tb{\scriptsize{6}}\> return($T$)\}\\
\end{tabbing} 
\vspace{-20pt}
\caption{\ti{GenPCT} procedure}
%\vspace{5pt}
\label{fig:gen_pct}
\end{wrapfigure}
%\end{figure}

%
In this section, we describe experiments with property-checking tests
(PCT) generated by procedure \ti{GenPCT} shown in
Figure~\ref{fig:gen_pct}. \ti{GenPCT} accepts a single-output circuit
$N$ and outputs a set of tests $T$. (For the sake of simplicity, we
assume here that $N \equiv 0$ holds.) \ti{GenPCT} starts with
generating formula $F_N \wedge z$ and a set of variables $V \subseteq
\V{F_N \wedge z}$. Then it calls \sas (see Fig.~\ref{fig:sem_str}) to
compute an SSA $P$ of formula $H(V)$ describing a projection of
circuit $N$ on $V$\!. If $H(V)$ does not depend on a variable $w \in V$,
all assignments of $P$ have the same value of $w$. Procedure
\ti{Diversify} randomizes the value of $w$ in the assignments of $P$.
Finally, \ti{BldTests} uses $P$ to extract a test set for circuit
$N$. If $X \subseteq V$ holds (where $X$ is the set of input variables
of $N$), \ti{BldTests} outputs all the different assignments to $X$
present in assignments of $P$. Otherwise, \ti{BldTests} calls
procedure \ti{GenTests} (see Fig.~\ref{fig:gen_tests}).

If $V = \V{F_N \wedge z}$, then $H(V)$ is $F_N \wedge z$ itself and
\ti{GenPCT} produces a CTS of $N$. Otherwise, according to definitions
of Subsection~\ref{ssec:approx}, \ti{GenPCT} generates a \acts (if $X
\subseteq V$) or \aacts (if $X \not\subseteq V$).

In the following subsections, we describe results of four experiments.
In the first three experiments we used circuits specifying next state
functions of latches of HWMCC-10 benchmarks. (The motivation was to
use realistic circuits.)  In our implementation of \sas, as a
SAT-solver, we used Minisat 2.0~\cite{minisat,minisat2.0}.  We also
employed Minisat to run simulation. To compute the output value of $N$
under test \pnt{x}, we added unit clauses specifying \pnt{x} to
formula $F_N \wedge z$ and checked its satisfiability.

\subsection{Comparing  CTSs, \acts{s} and \aacts{s}}
\label{ssec:cts}
%
%
%
%\vspace{-5pt}
\begin{table}
%\begin{wraptable}{l}{2.65in}
\small
\caption{\ti{Computing CTSs, \acts{s} and \aacts{s}}}
\vspace{-10pt}
\scriptsize
\begin{center}
%\begni{tabular}{|p{22pt}|p{36pt}|c|c|c|c|} \hline
\begin{tabular}{|@{}l@{}|@{}l@{}|@{}l@{}|@{}l@{}|@{}l@{}|@{}l@{}|@{}l@{\,}|@{\,}l@{}|@{\,}l@{~}|@{~}l@{~}|} \hline
~name     & \,la- &\,\#inp\,&\,\#ga-   & \multicolumn{2}{c|}{CTS  } & \multicolumn{4}{c|}{\acts or \aacts }  \\ \cline{5-10}
         &\,tch  &\,vars\,&\,tes          &\,$|\mi{SSA}|$ &\,time\,  &\,test\,&  &$|\mi{SSA}|$&  time  \\ 
  &       &         &           &\,(\#tests)&~(s.) &\,set\, & \,$|V|$\,  &(\#tests)   & (s.) \\
   &       &         &           &~$\times 10^3$ &   &\,type\,       &       & $\times 10^3$  &        \\ \hline
\,bob3 &  \,L26  & \,14    &\,41     &\,46~(2.0)    &\,0.1         &\,\tiny{\acts}  &\,14  &\,0.6~(0.6)\,& 0.01  \\ \hline
\,eijks258   &  \,L10  & \,16    &\,45     &\,259~(8.2)&\,0.5    &\,\tiny{\acts}  &\,16  &\,0.1~(0.1)\, & 0.02 \\ \hline
\,cmudme1    &  \,L230 &  \,19    & \,50     &\,2,184~(63)\, &\,5.4     &\,\tiny{\acts} &\,19  &\,13~(13)\,& 0.1 \\ \hline
\,mutexp0    &\,L60  & \,29    & \,199    &  ~memout   & ~$*$    &\,\tiny{\acts}&\,29  &\,659~(659)\,&\,26 \\ \hline
\,pdtpmsmiim\, &\,L118  & \,31    & \,136    &  ~memout  & ~$*$   & \,\tiny{\acts} &\,31  &\,936~(936)\,& 4.2 \\ \hline
\,abp4pold   &\,L270  &\,129    & \,1,178\,&  ~memout  & ~$*$   &\,\tiny{\aacts}\,&\,22   &\,0.9~(0.5)\, & 0.6  \\ \hline
\,pj2009  &\,L1318\,&\,366     & \,25,160\,& ~memout   & ~$*$   & \,\tiny{\aacts}\,   &\,22 &\,0.6~(0.3)\, & 51 \\ \hline
\,mentorb..00  &\,L8670\,&\,626 &\,3,156\, & ~memout    & ~$*$   &\,\tiny{\aacts}\,    &\,22 &\,1.2~(0.6)\, & 11 \\ \hline
\,139454p0&\,L1676\,&\,791     &\,19,843\,& ~memout  & ~$*$   & \,\tiny{\aacts}\,  &\,22  &\,0.1~(0.1)\,  & 99 \\ \hline
%  &       &         &           &                           &        &     &&      &  \\ \hline
\end{tabular}                
\end{center}
\vspace{-15pt}
\label{tbl:cts}
%\end{wraptable}
\end{table}

The objective of the first experiment was to give examples of circuits
with non-trivial CTSs and compare the efficiency of computing CTSs,
\acts{s} and \aacts{s}.  In this experiment, $N$ was a miter
specifying equivalence checking of circuits $M'$ and $M''$ (see
Figure~\ref{fig:gen_miter}).  $M''$ was obtained from $M'$ by
optimizing the latter with ABC~\cite{abc}.

The results of the first experiment are shown in Table~\ref{tbl:cts}.
The first two columns specify an HWMCC-10 benchmark and its latch
whose next state function was used as $M'$.  The next two columns give
the number of input variables and that of gates in the miter $N$. The
following pair of columns describe computing a CTS for $N$.  The first
column of this pair gives the size of the SSA $P$ found by \ti{GenPCT}
in thousands.  The number of tests in the set $T$ extracted from $P$
is shown in the parentheses in thousands. The second column of this
pair gives the run time of \ti{GenPCT} in seconds.

The last four columns of Table~\ref{tbl:cts} describe results of
computing test sets for a projection of $N$ on a set of variables
$V$. The first column of this group shows if \acts or \aacts was
computed whereas the next column gives the size of $V$. The third
column of this group provides the size of SSA $P$ and the test set $T$
extracted from $P$ (in parentheses). Both sizes are given in
thousands. The last column shows the run time of \ti{GenPCT}. For the
first five examples, we used a projection of $N$ on $X$, thus
constructing a \acts of $N$. For the last four examples we computed a
projection of $N$ on an internal cut (see
Subsection~\ref{ssec:int_cut}) thus generating a \aacts of $N$.  \pct
was called with parameter $\mi{Tries}$ set to 5 (see
Fig.~\ref{fig:gen_tests} and~\ref{fig:gen_pct}).

For the first three examples, \pct managed to build non-trivial CTSs
that are smaller than $2^{|X|}$.  For instance, the trivial CTS for
example \ti{bob3} consists of $2^{14}$=16,384 tests, whereas \pct
found a CTS of 2,004 tests. (So, to prove $M'$ and $M''$ equivalent it
suffices to run 2,004 out of 16,384 tests.) For the other examples,
\pct failed to build a non-trivial CTS due to exceeding the memory
limit (1.5 Gbytes). On the other hand, \pct built a \acts or \aacts
for all nine examples of Table~\ref{tbl:cts}. Note, however, that
\acts{s} give only a moderate improvement over CTSs. For the last four
examples \pct failed to compute an \acts of $N$ due to memory overflow
whereas it had no problem computing an \aacts of $N$. So \aacts{s} can
be computed efficiently even for large circuits. Further, we show that
\aacts{s} are also very effective.

\subsection{Re-using property-checking tests to detect bugs}
\label{ssec:bug}
In the second experiment, we employed the idea of re-using
property-checking tests (see Subsection~\ref{ssec:des_changes}) to
verify \ti{relaxed} equivalence~\cite{rlx_ec}.  Let circuit \spi{M} be
obtained from circuit $M$ by applying a set of changes $\pi$. Regular
equivalence of $M$ and \spi{M} means that these circuits produce the
same output assignment for the same input assignment. Relaxed
equivalence requires only that the difference between output
assignments of $M$ and \spi{M} is in a \ti{specified range}. So
regular equivalence implies relaxed one and hence the latter is a
\ti{weaker} property than the former. Intuitively, this makes relaxed
equivalence harder for testing (because the space of buggy behaviors
is smaller).

In this experiment, we compared two-output circuits $M$ and $M^{\pi}$.
Namely we checked property $\xi(M,M^{\pi})$ that $(y_1 \equiv
y^{\pi}_1) \vee (y_2 \equiv y^{\pi}_2)$ holds where $y_1,y_2$ and
$y^{\pi}_1,y^{\pi}_2$ specify the outputs of $M$ and $M^{\pi}$
respectively. Property \pim states that the Hamming distance between
the output assignments produced by $M$ and \spi{M} for the same input
assignment is less or equal to 1.  Circuit $M$ was extracted from the
transition relation of an HWMCC-10 benchmark.  Circuit $M^{\pi}$ was
obtained from $M$ by making changes $\pi$ that broke property \pim.

Let \spi{N} denote a circuit specifying property \pim. In the
experiment, we tested \spi{N} using three approaches. The first
approach was to apply tests generated to detect stuck-at faults
(\tb{SAF})~\cite{abram} in $M$. We used SAF tests as an example of a
test set driven by a coverage metric\footnote{Note that SAF tests are
  stronger than tests satisfying traditional coverage metrics
  e.g. those used in software.  In addition to exciting an event in
  $M$, a SAF test must \ti{propagate} the effect of this event to a
  primary output of $M$.}  The second approach was random testing of
\spi{N}. In the third approach we did the following. First, we built a
\aacts for circuit $N^{\emptyset}$ specifying property \pim for the
case where $\pi=\emptyset$ and hence $M^{\pi}$ was identical to
$M$. (Obviously, $N^{\emptyset} \equiv 0$ holds.)  Then we
\ti{re-used} this \aacts to verify circuit \spi{N} for the case $\pi
\neq \emptyset$.

%
%
%
%\vspace{-5pt}
\begin{table}[h!]
%\begin{wraptable}{l}{2.65in}
\small
\caption{\ti{Bug detection. \aacts{s} are computed for $|V|=18$. The number of random
tests is limited to $10^8$. An asterisk marks test sets with a counterexample}}
\vspace{-5pt}
\scriptsize
\begin{center}
%\begni{tabular}{|p{22pt}|p{36pt}|c|c|c|c|} \hline
  \begin{tabular}{|@{}l@{}|@{}l@{}|@{}l@{}|@{}l@{}|@{}l@{}|@{}l@{}|@{}l@{}|@{}l@{}|@{}l@{}|@{}l@{}|} \hline
~name      &\,\#inp\,&\,\#ga-   & \multicolumn{2}{c|}{SAF} &  \multicolumn{2}{c|}{random} & \multicolumn{3}{c|}{testing by }  \\
           &\,vars\,&\,tes      & \multicolumn{2}{c|}{tests} & \multicolumn{2}{c|}{tests} &\multicolumn{3}{c|}{\aacts}  \\ \cline{4-10}
         &          &    &\,\#tests\,& \,time\,&~~\#tests\,&\,time\,&\,$|\mi{SSA}|$\,&\,\#tests\,&\,time\, \rule{0pt}{2.8mm} \\ 
&               &            &  & ~(s.) & $\times 10^6$ &~(s.)& & &~(s.) \\ \hline
\,cmugigamax       &\,104 &\,2,007\,   &\,989    &\,4.5  &~~100 &\,1,649\,&~~50   &\,3,421\bm{^*} &\,22  \rule{0pt}{2.8mm}\\ \hline
\,kenoopp1        &\,162 &\,2,247\,  &\,1,007  &\,4.7  &~~100&\,2,147\,  &~~72  &\,2,345\bm{^*}   &\,20 \rule{0pt}{2.8mm} \\ \hline
\,pd..ackjack4\,&\,198 &\,1,251\,  &\,  216  &\,0.4   &~~2.4\bm{^*} \,&~30\,  &~~62   &\,728\bm{^*} &\,8.5\rule{0pt}{2.8mm}\\ \hline
\,pdtpmsns2        &\,236 &\,1,187\,  &\,967    &\,1.0  &~~100 &\,1,507\,&~~31 &\,560\bm{^*}&\,1.2 \rule{0pt}{2.8mm}\\ \hline
\,abp4pold         &\,258 &\,2,547\,   &\,1,306   &\,6.6 &~~100 &\,3,555\,&~~22 &\,500\bm{^*} &\,1.5  \rule{0pt}{2.8mm}\\ \hline
\,pdtvissfeistel\,&\,342 &\,2,111\,   &\,1,232   &\,2.8 &~~100&\,4,557\,&~~86  &\,932\bm{^*}       &\,35 \rule{0pt}{2.8mm}\\ \hline
% \,kenflashp03     &\,180 &\,14,943\,  &\,        &\,   &&       &\,      &\,        &\,    \rule{0pt}{2.8mm}\\ \hline
\,neclaftp1001    &\,398 &\,2,707\,  &\,1,723       &\,0.2    &~~100&1,313&~~40    &\,417\bm{^*}    &\,6.7 \rule{0pt}{2.8mm}\\ \hline
\,mentorb..00\,   &\,640&\,2.951\, &\,1,932   &\,10  \,&~~3.4\bm{^*} \,&\,688\,&~~43&\,1,476\bm{^*} &\,15   \rule{0pt}{2.8mm}\\ \hline
%\,                 \,&\,    &\,     \,  &\,        &\,        &&  &\,      &\,       &\,    \rule{0pt}{2.8mm}\\ \hline
\end{tabular}                
\end{center}
%\vspace{-10pt}
\label{tbl:flt_tst}
%\end{wraptable}
\end{table}

A representative subset of examples we tried is shown in
Table~\ref{tbl:flt_tst}.  The first column gives the name of the
HWMCC-10 benchmark from which circuit $M$ was extracted. The next two
columns show the size of the circuit. The following two columns list
the number of SAF tests and the total run time (i.e. time taken to
generate tests\footnote{To generate SAF tests we wrote a simple program based on
Minisat. Although this program did not have the efficiency of a
dedicated ATPG tool it was good enough for the purpose of studying the
effectiveness of SAF tests.

} and run simulation by
Minisat). The next two columns show the performance of random testing.
The last three columns describe testing by \aacts{s}. The first column
of the three gives the size of the SSA for formula $H(V)$ generated by
\pct.  The set of variables $V$, $|V|=18$, forming an internal cut was
generated by procedure \ti{GenCut} (see Fig.~\ref{fig:int_cut}).  The
next column shows the size of a \aacts obtained with $\mi{Tries}$ set
to 100.  The last column gives the total run time.

The test sets with a counterexample breaking \pim are marked with an
asterisk. Table~\ref{tbl:flt_tst} shows that SAF tests failed to
detect a bug, random tests found a bug for two examples and \aacts{s}
succeeded for all examples. (On the other hand, the same SAF tests and
\aacts{s} found bugs in all eight examples modified to check
\ti{regular} equivalence i.e the property $(y_1 \equiv y^{\pi}_1)
\wedge (y_2 \equiv y^{\pi}_2)$. Random testing limited to 100 million
tests found a bug in five examples.) So \aacts{s} proved effective
even in testing a ``weak'' property.

\subsection{Testing corner cases}
\label{ssec:ecorners}

In the third experiment, we generated \acts{s} and \aacts{s} to test
corner cases (see Subsection~\ref{ssec:corners}). First, we formed a
circuit $K$ that evaluates to 0 for almost all input assignments.  So,
the assignments for which $K$ evaluates to 1 are corner
cases\footnote{We assume here that $K$ is a subcircuit of some circuit
  $M$. The input assignments for which $K$ evaluates to 1 are corner
  cases for $M$.}.  We compared the frequency of hitting corner cases
by random tests and by tests of a set $T$ built by \pct as
follows. Let $N$ be a miter of copies $K'$ and $K''$ (see
Figure~\ref{fig:gen_miter}). The set $T$ was generated using a
projection of $N$ either on the set $X$ of input variables or an
internal cut of $N$.

%
%
%
%\vspace{-5pt}
%\begin{wraptable}{L}{2.9in}
\begin{table}[h!]
\small
\caption{\ti{Testing corner cases}}
\vspace{-5pt}
\scriptsize
\begin{center}
%\renewcommand{\arraystretch}{1.2} % Default value: 1
%\begni{tabular}{|p{22pt}|p{36pt}|c|c|c|c|} \hline
\begin{tabular}{|@{~}l@{~}|@{~}l@{~}|@{~}l@{~}|@{~}l@{~}|@{}l@{}|@{}l@{}|@{}l@{}|@{}l@{}|@{}l@{}|@{}l@{}|@{}l@{}|@{}l@{}|@{}l@{}|} \hline
~~name          &la- & \#inp &\#and &\,\#ga- & \multicolumn{2}{c|}{random} & \multicolumn{5}{c|}{testing by}  \\
                &tch&  vars   &vars &\,tes  & \multicolumn{2}{c|}{testing} & \multicolumn{5}{c|}{\acts and \aacts}  \\ \cline{6-12}
                &     &    &     &      &\,\#te-\,   &\,\#hits\,&\,test\, &        &~\#te- &\,\#hits\,  &~time\,  \\ 
                &     &    &     &      &\,sts\,     & ~\%    &\,set      &\,$|V|$\,&~sts   &~ \%      &~(s.) \\ \hline
  pd..gigamax5  & L46 & 43 & 10  &\,512\,  &\,$10^5$\,&~0.02  &\,\tiny{\acts}\,&~43 &~547   &~7.1   & ~0.2   \rule{0pt}{2.8mm}     \\ \hline
  pd..gigamax5  & L46 & 63 & 30  &\,512\,  &\,$10^8$\,&~0     & \,\tiny{\acts}\,&~63 &~1,243\,   &~3.0    &~ 0.2     \rule{0pt}{2.8mm}     \\ \hline
  pdtvisbpb1    & L48 & 46 & 10  &\,108\, &  $10^5$   &~0.04   &\,\tiny{\acts}\,&~46 &~398      &~9.0   &~0.01 \rule{0pt}{2.8mm}            \\ \hline
  pdtvisbpb1     & L48 & 66 & 30  &\,108\, &  $10^8$   &~0     & \,\tiny{\acts}\,&~66 &~736      &~3.1    &~0.03 \rule{0pt}{2.8mm}            \\ \hline
   abp4pold      &L270 & 139  & 10  &\,637\, & $10^5$  &~0.02  & \,\tiny{\aacts}\,&~35 &~2,047   &~8.5    &~0.9  \rule{0pt}{2.8mm}            \\ \hline
    abp4pold     &L270 & 159  & 30  &\,637\,& $10^8$   &~0     & \,\tiny{\aacts}\,&~55 &~5,256   &~3.3    &~2.1  \rule{0pt}{2.8mm}            \\ \hline
mentorbm1p00  &  L8670 &636  & 10 &\,1,630\, & $10^5$ &~0.1    & \,\tiny{\aacts}\,&~35 &~594     &~11     &~3.7 \rule{0pt}{2.8mm}            \\ \hline
mentorbm1p00   & L8670 &656  & 30 &\,1,630\, & $10^8$ &~0      & \,\tiny{\aacts}\,&~55 &~2,009   &~4.7    &~8.7 \rule{0pt}{2.8mm}            \\ \hline
%   &     &    &  &    &        &            &         &               &    &         &   \rule{0pt}{2.8mm}            \\ \hline
\end{tabular}                
\end{center}
%\vspace{-5pt}
\label{tbl:corners}
\end{table}
%\end{wraptable}

%

To build circuit $K$, we extracted the circuit $R$ specifying the next
state function of a latch of a HWMCC-10 benchmark and composed it with
an $n$-input AND gate as shown in Figure~\ref{fig:corners}.  The
circuit $K$ outputs 1 only if $R$ evaluates to 1 and the first
$n\!-\!1$ inputs variables of the AND gate are set to 1 too. So the
input assignments for which $K$ evaluates to 1 are ``corner cases''.

\setlength{\intextsep}{4pt}
\begin{wrapfigure}{L}{1in}
%\begin{figure} 
 \begin{center}
   \includegraphics[width=0.8in,height=1.3in]{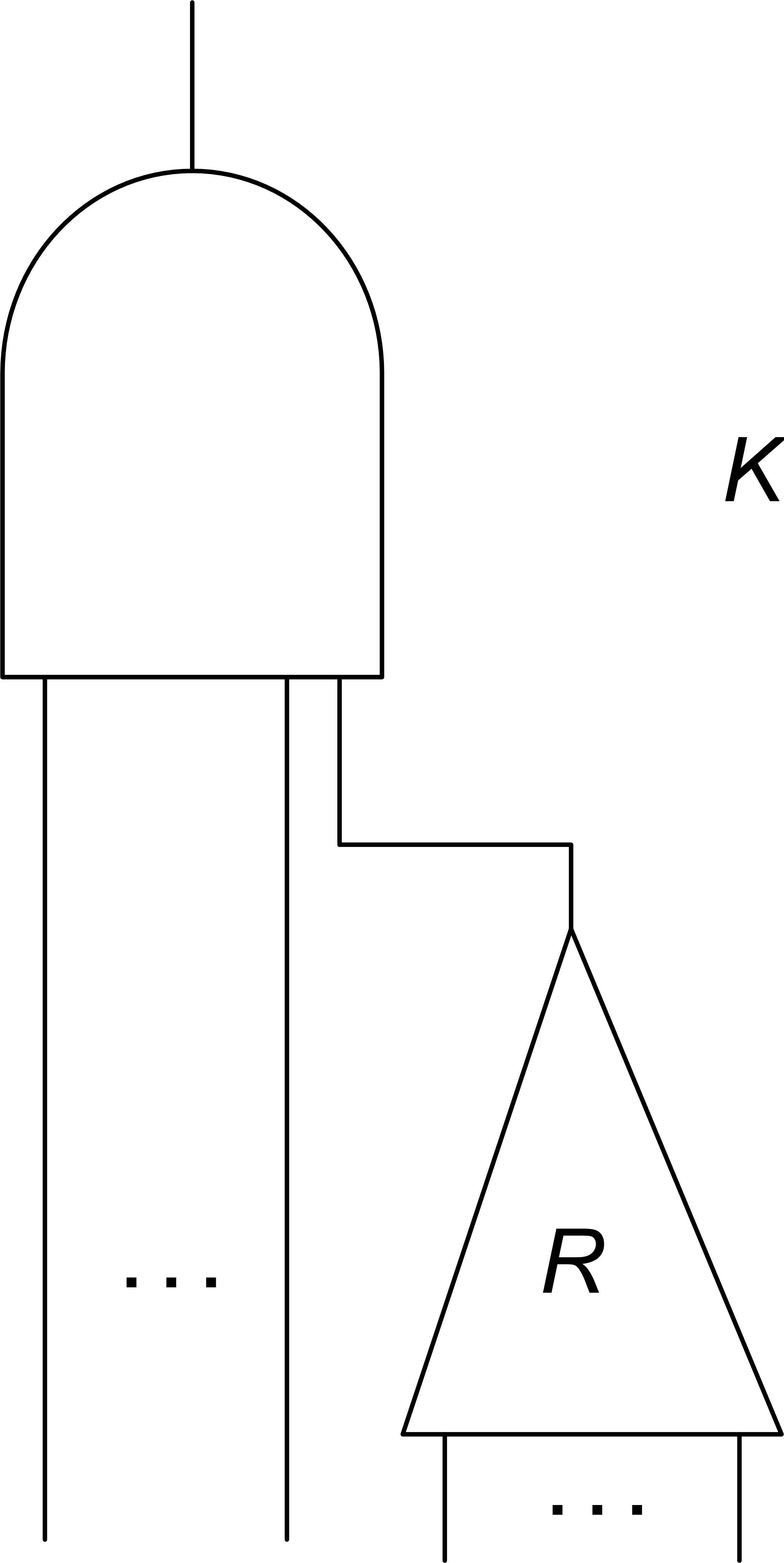}
  \end{center}
\vspace{-10pt}
\caption{Circuit $K$ whose output value is biased to 0}
\vspace{5pt}
\label{fig:corners}
%\end{figure}
\end{wrapfigure}

The results of the experiment are given in Table~\ref{tbl:corners}.
The first two columns name the benchmark and latch whose next state
function was used as circuit $R$. The next three columns give the
total number of input variables of $K$, the value of $n$ in the
$n$-input AND gate fed by $R$ and the number of gates in circuit
$K$. The following pair of columns describes the performance of random
testing. The first column of this pair gives the total number of
tests. The next column shows the percentage of times circuit $K$
evaluated to 1 (and so a corner case was hit). The last five columns
of Table~\ref{tbl:corners} describe the results of \pct. The first
column of the five indicates whether a \acts or \aacts was
generated. The second column gives the size of set $V$ on which the
projection of $N$ was computed.  \acts{s} were generated with $V = X$.
When computing \aacts{s}, the set $V$ formed an internal cut of $N$
and the value of $\mi{Tries}$ was set to 1.  The next column shows the
size of the test set. The fourth column gives the percentage of times
a corner case was hit. The last column shows the total run time.

The examples of Table~\ref{tbl:corners} were generated in pairs that
shared the same circuit $R$ and were different only in the size of the
AND gate fed by $R$. For instance, in the first and second entry of
Table~\ref{tbl:corners}, circuit $K$ was obtained by composing the
same circuit $R$ extracted from benchmark \ti{pdtvisgigamax5} with
10-input and 30-input AND gates respectively. Table~\ref{tbl:corners}
shows that for circuits with a 10-input AND gate, random testing hit
corner cases but the percentage of those events was much lower than
for \acts{s} and \aacts{s}. On the other hand, 100 millions of random
tests failed to hit a single corner case for examples with a 30-input
AND gate in sharp contrast to \acts{s} and \aacts{s}.

\subsection{Testing properties defined incorrectly}
\label{ssec:missed_props}
%
%
%
%\vspace{-5pt}
\begin{wraptable}{L}{1.7in}
%\begin{table}
\small
\caption{\ti{Testing a ``misdefined'' property of a counter. \aacts is
    computed for $|V|=14$}}
\vspace{-5pt}
\scriptsize
%\begin{center}
%\renewcommand{\arraystretch}{1.2} % Default value: 1
%\begni{tabular}{|p{22pt}|p{36pt}|c|c|c|c|} \hline
\begin{tabular}{|@{}l@{}|@{}l@{}|@{}l@{}|@{}l@{}|@{}l@{}|@{}l@{}|@{}l@{}|@{}l@{}|@{}l@{}|@{}l@{}|@{}l@{}|} \hline
           &\,\#time\,  &\,\#rand & & \multicolumn{3}{c|}{\aacts}  \\  \cline{5-7}
~$C_p$~  &\,fra-  &\,tests\,   &\,\#SAF\, &\,\#it\,& \,\#te-   &\,time\,  \\ 
      &\,mes      &$\times 10^6$ &\,tests   &\,ter & \,sts\,  &~(s.) \\  \hline
$\,C_4$  &~14   &\,0.001\bm{^*} &\,172\bm{^*}&\,1 & \,572\bm{^*}\,  & \,0.1 \rule{0pt}{2.8mm}    \\ \hline
$\,C_5$  &~30   &\,28\bm{^*}\,  &\,505 &\,2 & \,1,393\bm{^*}\,  & \,0.2  \rule{0pt}{2.8mm}   \\ \hline
$\,C_6$  &~62    &\,100    &\,1,276 &\,2 &\,338\bm{^*}\,  &\,0.5   \rule{0pt}{2.8mm}            \\ \hline
$\,C_7$  &~126   &\,100    &\,3,031 &\,1 &\,1,025\bm{^*}\,  &\,2.1   \rule{0pt}{2.8mm}            \\ \hline
%     &       &                  &         &  & &                   &         &   \rule{0pt}{2.8mm}            \\ \hline
\end{tabular}                
%\end{center}
%\vspace{-5pt}
\label{tbl:missed}
%\end{table}
\end{wraptable}

The objective of the last experiment was to show that one can use
tests generated by \ti{GenPCT} to address the problem of incorrect
definition of properties (see
Subsection~\ref{ssec:incomp_spec}). Assume that our design $D$
contains a $p$-bit counter. Denote this counter as $C_p$. Let
$\mi{val}(C_p)$ denote the current value of $C_p$. Assume that $C_p$
has one initial state $\mi{val}(C_p)=0$ and the counter is reset to
this state every time it reaches state $\mi{val}(C_p)=2^p\!-\!2$. So
property $\xi$ that $\mi{val}(C_p) < 2^p\!-\!1$ is true. Assume that
the correctness of $D$ requires proving a \ti{stronger} property
$\xi'$ that, say, $\mi{val}(C_p) < 2^p\!-\!3$ holds. In contrast to
$\xi$, property $\xi'$ \ti{is not true}. Let $N$ (respectively $N'$)
be a circuit specifying the correctness of $\xi$ (respectively $\xi'$)
for a given number of time frames. Let $T$ be a \aacts built by
\ti{GenPCT} when proving $N \equiv 0$ under the assumption that the
initial state of $C_p$ is $\mi{val}(C_p)=0$ (see
Subsection~\ref{ssec:seq_circ}). The idea of the experiment was to
show that $T$ could contain a counterexample to $N' \equiv 0$. So,
testing design $D$ by $T$ could expose a bug overlooked due to checking
property $\xi$ instead of $\xi'$.

Circuits $N$ and $N'$ above were obtained by unrolling the transition
relation of the counter $k$ times and adding gates specifying property
$\xi$ or $\xi'$. The value of $k$ was set to $2^p\!-\!2$ to guarantee
the existence of a trace breaking $\xi'$. Table~\ref{tbl:missed}
compares random and SAF tests\footnote{SAF tests were generated for
  the circuit obtained by unrolling the transition relation of the
  counter $k$ times. The input variables of this circuit corresponding
  to state varaibles of the first time frame were set to 0 (to take
  into account the initial state of the counter).} with a \aacts. The
latter was computed for a set $V$ specifying an internal cut of $N$
and $\mi{Tries}$ set to 1. The second column of Table~\ref{tbl:missed}
shows the number of times the transition relation was unrolled to
obtain $N$ and $N'$. The next two columns describe results of running
random and SAF tests. An asterisk marks test sets with a
counterexample to $\xi'$. The last three columns show the performance
of \aacts{s}. The first column of the three gives the number of
iterations needed to break $\xi'$. In each iteration, a new \aacts was
generated using a different choice of the center of SSA $P$ (see
Definition~\ref{def:ssa}). The last two columns give the total number
of tests and run time summed up over all iterations.

Table~\ref{tbl:missed} shows that random testing broke $\xi'$ only for
$C_4$ and $C_5$ and SAF tests succeeded only for $C_4$. Tests of
\aacts broke $\xi'$ for all 4 examples.

\section{Background}
% Property-specific testing
As we mentioned earlier, traditional testing checks if a circuit $M$
is correct as a whole. This notion of correctness means satisfying a
conjunction of \ti{many} properties of $M$. For this reason, one tries
to spray tests uniformly in the space of all input assignments.  To
improve the effectiveness of testing, one can try to run many tests at
once as it is done in symbolic simulation~\cite{SymbolSim}. To avoid
generation of tests that for some reason should be or can be excluded,
a set of constraints can be used~\cite{cnst_rand}. Another method of
making testing more reliable is to generate tests exciting a
particular set of events specified by a coverage
metric~\cite{cov_metr}. Our approach is different from those above in
that it is aimed at testing a particular property of $M$.

The method of testing introduced in~\cite{bridging} is based on the
idea that tests should be treated as a ``proof encoding'' rather than
a sample of the search space. (The relation between tests and proofs
have been also studied in software verification, e.g.
in~\cite{UnitTests,godefroid,Beckman}). In this paper, we take a
different point of view where testing becomes a \ti{part} of a formal
proof namely the part that performs structural derivations.

Reasoning about SAT in terms of random walks was pioneered
in~\cite{rand_walk}. The centered SSAs we introduce in this paper bear
some similarity to sets of assignments generated in de-randomization
of Sch\"oning's algorithm~\cite{balls}. Typically, centered SSAs are
much smaller than uncentered SSAs of~\cite{ssp}.

The first version of \sas procedure is presented in
report~\cite{cmpl_tst}.  It has a much tighter integration between the
structural part (computation of SSAs) and semantic part (derivation of
formula $H$ implied by the original formula). The advantage of the new
version of \sas described in this paper is twofold. First, it is much
simpler than \sas of~\cite{cmpl_tst}. In particular, any resolution
based SAT-solver that generates proofs can be used to implement the
new \sas.  Second, the simplicity of the new version makes it much
easier to achieve the level of scalability where \sas becomes
practical.

% Centered SSAs

\section{Conclusion}
We consider the problem of finding a Complete Test Set (CTS) for a
combinational circuit $N$ that is a test set proving $N \equiv 0$.  We
use the machinery of stable sets of assignments to derive non-trivial
CTSs i.e. those that do not include all possible input assignments.
Computing a CTS for a large circuit $N$ is inefficient. So, we present
a procedure that generates a test set for a ``projection'' of $N$ on a
subset $V$ of variables of $N$. Depending on the choice of $V$\!, this
procedure generates a test set \acts that is an approximation of an
CTS or a test set \aacts that is an approximation of \acts. We give
experimental results showing that \aacts{s} can be efficiently
computed even for large circuits and are effective in solving
verification problems.

\clearpage
\bibliographystyle{plain}
\bibliography{short_sat,local}
%\clearpage
\vspace{15pt}
\appendices
\section{Proofs}
\label{app:proofs}
\setcounter{proposition}{0}
\begin{proposition}
 Formula $H$ is unsatisfiable iff it has an SSA.
\end{proposition}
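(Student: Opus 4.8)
**The plan is to prove both directions separately, the "if" direction being the easy one and the "only if" direction requiring the real work.**

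For the "if" direction, suppose $P$ is an SSA of $H$ with center $\vec{p}_{\mathit{init}}$ and AC-mapping $\Phi$. I would argue by contradiction: assume $H$ has a satisfying assignment, and let $\vec{s}$ be one that is closest to $\vec{p}_{\mathit{init}}$ in Hamming distance. Run \textit{BuildPath}$(H,\Phi,\vec{p}_{\mathit{init}},\vec{s})$ from Figure~\ref{fig:bld_path}. The key invariant to maintain is that every assignment $\vec{p}_i$ produced by \textit{BuildPath} lies in $P$; this follows by induction, since $\vec{p}_1 = \vec{p}_{\mathit{init}} \in P$, and if $\vec{p}_i \in P$ then $\vec{p}_{i+1}$ is obtained by flipping a variable of $C = \Phi(\vec{p}_i)$ to move \emph{away} from $\vec{s}$—wait, one must check the direction carefully: \textit{BuildPath} picks a variable on which $\vec{p}_i$ and $\vec{s}$ disagree, so flipping it moves \emph{toward} $\vec{s}$, hence \emph{away} from $\vec{p}_{\mathit{init}}$ is not immediate. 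The right observation is that since $\vec{s}$ is the \emph{closest} satisfying assignment to $\vec{p}_{\mathit{init}}$, and $\vec{p}_i \neq \vec{s}$ falsifies $H$, every neighbor of $\vec{p}_i$ satisfying $C$ and lying on a shortest path from $\vec{p}_i$ to $\vec{s}$ is at distance from $\vec{p}_{\mathit{init}}$ at least that of $\vec{p}_i$ (otherwise we could find a closer satisfying assignment), so $\vec{p}_{i+1} \in \mathit{Nbhd}(\vec{p}_{\mathit{init}},\vec{p}_i,C) \subseteq P$. Since every assignment of $P$ falsifies $H$ but \textit{BuildPath} terminates at $\vec{s}$ which satisfies $H$, we get $\vec{s} \in P$ yet $\vec{s}$ falsifies $H$, a contradiction. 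Hence $H$ is unsatisfiable.

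For the "only if" direction, suppose $H$ is unsatisfiable; I must exhibit an SSA. The natural move is to invoke \textit{BuildSSA}$(H)$ from Figure~\ref{fig:bld_ssa} and show it terminates returning a genuine SSA. Termination is clear because there are only $2^{|W|}$ assignments to $\mathcal{V}(H)$ and each enters $Q$ (hence later $E$) at most once—once an assignment is placed in $E$ it is never re-added since line 10 subtracts $E$. Since $H$ is unsatisfiable, the test at line 7 never succeeds, so the procedure exits the loop with $Q = \emptyset$ and returns $E$ together with $\vec{p}_{\mathit{init}}$ and $\Phi$. It remains to verify that $E$ satisfies Definition~\ref{def:ssa}: every element of $E$ falsifies $H$ (immediate, since each was passed through line 7 without being flagged satisfying, and then line 9 picks a falsified clause), $\Phi$ is a well-defined AC-mapping (each pair $(\vec{p},C)$ added at line 13 has $\vec{p}$ falsifying $C$), and crucially, for each $\vec{p}_i \in E$ we have $\mathit{Nbhd}(\vec{p}_{\mathit{init}},\vec{p}_i,\Phi(\vec{p}_i)) \subseteq E$. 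This last point is where I expect the main obstacle: when $\vec{p}_i$ was processed, line 10 added $\mathit{Nbhd}(\vec{p}_{\mathit{init}},\vec{p}_i,C) \setminus E$ to $Q$, so every such neighbor is in $E \cup Q$ at that moment; since the loop only terminates when $Q$ is empty and $E$ only grows, all these neighbors end up in the final $E$. One subtlety worth spelling out is that $\mathit{Nbhd}$ in line 10 of \textit{BuildSSA} uses $\vec{p}$ as \emph{both} the center-reference and the base point, i.e. it is $\mathit{Nbhd}(\vec{p},\vec{p},C)$, which by the definitions equals all neighbors satisfying $C$ at distance from $\vec{p}$ strictly greater than that of $\vec{p}$ itself—but distance from $\vec{p}$ to $\vec{p}$ is $0$, so this is \emph{all} of $\mathit{Nbhd}(\vec{p},C)$; I need to reconcile this with the center being $\vec{p}_{\mathit{init}}$ in the SSA, which means checking that $\mathit{Nbhd}(\vec{p}_{\mathit{init}},\vec{p}_i,C) \subseteq \mathit{Nbhd}(\vec{p}_i,\vec{p}_i,C) = \mathit{Nbhd}(\vec{p}_i,C)$ trivially holds, so the inclusion proved for the algorithm is actually \emph{stronger} than what Definition~\ref{def:ssa} requires. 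That resolves the potential mismatch.

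The hardest part overall will be handling the Hamming-distance/shortest-path bookkeeping in the "if" direction rigorously—specifically, justifying that the neighbor chosen by \textit{BuildPath} genuinely lands in $\mathit{Nbhd}(\vec{p}_{\mathit{init}},\vec{p}_i,C)$ rather than some neighbor closer to $\vec{p}_{\mathit{init}}$. The clean way to do this is to not use \textit{BuildPath} verbatim but instead argue: pick $\vec{s}$ a satisfying assignment minimizing distance to $\vec{p}_{\mathit{init}}$; if $\vec{p}_{\mathit{init}}$ itself satisfied $H$ we would be done (contradiction), so $\vec{p}_{\mathit{init}} \in P$ falsifies $C_0 = \Phi(\vec{p}_{\mathit{init}})$; since $\vec{s}$ satisfies $C_0$ there is a variable $v$ of $C_0$ on which they differ, and flipping $v$ in $\vec{p}_{\mathit{init}}$ gives a point $\vec{p}'$ \emph{strictly closer} to $\vec{s}$; now $\vec{p}'$ is \emph{farther} from $\vec{p}_{\mathit{init}}$ (distance $1 > 0$) so $\vec{p}' \in \mathit{Nbhd}(\vec{p}_{\mathit{init}},\vec{p}_{\mathit{init}},C_0) \subseteq P$; iterate this with $\vec{p}'$ in place of $\vec{p}_{\mathit{init}}$, noting that at each step the current point is in $P$ hence falsifies $H$ hence is $\neq \vec{s}$, and the new point strictly decreases distance to $\vec{s}$ while being a valid away-from-$\vec{p}_{\mathit{init}}$-at-the-current-step neighbor—but the reference center in the SSA is fixed at $\vec{p}_{\mathit{init}}$, so I actually need $\mathit{Nbhd}(\vec{p}_{\mathit{init}},\vec{p}_i,C)$, which requires the new point to be farther from $\vec{p}_{\mathit{init}}$ than $\vec{p}_i$ is. This is exactly where minimality of $\vec{s}$ is used: if flipping $v$ moved us \emph{closer} to $\vec{p}_{\mathit{init}}$, then continuing greedily toward $\vec{s}$ would reach $\vec{s}$ via a path whose length plus the new (smaller) distance contradicts $\vec{s}$ being closest—so I would set this up as: among all shortest paths from $\vec{p}_{\mathit{init}}$ to $\vec{s}$, each step strictly increases distance to $\vec{p}_{\mathit{init}}$, and the variable flipped at $\vec{p}_i$ can be taken to lie in $\Phi(\vec{p}_i)$ because $\vec{s}$ satisfies that clause. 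I will present this as a clean induction on $d - i$ where $d$ is the distance from $\vec{p}_{\mathit{init}}$ to $\vec{s}$, concluding $\vec{s} \in P$, contradicting that $P$ consists of falsifying assignments.
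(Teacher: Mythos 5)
Your proof is correct and takes essentially the same route as the paper's: the ``if'' direction uses \textit{BuildPath} to walk from the center \sub{p}{init} to a satisfying assignment \pnt{s} while the SSA closure property keeps every step inside $P$, yielding the contradiction $\pnt{s} \in P$; the ``only if'' direction checks that \textit{BuildSSA} terminates and returns a genuine SSA. The one step the paper leaves implicit---that each step of the walk lands in \NNbhd{p}{p}{i}{C}, i.e.\ moves strictly farther from the center---you ultimately justify correctly (the greedy walk toward \pnt{s} flips each disagreeing variable exactly once, hence follows a shortest path), although your detour through the minimality of \pnt{s} is not actually needed for that.
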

\begin{proof} \tb{If part.}  Assume the contrary i.e. $P$ is an SSA of $H$ with
center \sub{p}{init} and AC-mapping \Fi and $H$ is
satisfiable. Let \pnt{s}~\,be an assignment satisfying $H$ that is the
closest to \sub{p}{init} in terms of the Hamming distance.  Then
procedure \ti{BuildPath} (see Fig.~\ref{fig:bld_path}) can build a
sequence of assignments $\ppnt{p}{1},\dots,\ppnt{p}{i}$ such that
\begin{itemize}
\item $i = \mi{Hamming\_distance}(\sub{p}{init},\pnt{s})+1$
\item $\ppnt{p}{1} = \sub{p}{init}$ and $\ppnt{p}{i} = \pnt{s}$
\end{itemize}
By definition of \ti{BuildPath}, assignment \ppnt{p}{j+1} is closer
to \pnt{s} and farther away from \sub{p}{init} than \ppnt{p}{j} where
$1 \leq j \leq i-1$. This means that \ppnt{p}{j+1} is in
$\mi{Nbhd}(\sub{p}{init},\ppnt{p}{j},C)$ where $C = \Fi(\ppnt{p}{j})$.
In particular, \pnt{s} is in
$\mi{Nbhd}(\sub{p}{init},\ppnt{p}{i-1},C)$ and so \pnt{s} is in
$P$. However, by definition of an SSA, $P$ consists only of
assignments falsifying $H$. Thus, we have a contradiction.

\vspace{4pt}
\noindent\tb{Only if part}. Assume that formula $H$ is unsatisfiable.  
By applying \ti{BuildSSA} (see Fig.~\ref{fig:bld_ssa}) to $H$, one
generates a set $P$ that is an SSA of $H$ with respect to some
center \sub{p}{init} and AC-mapping \Fi.
\end{proof}

\section{CTSs And Circuit Redundancy}
\label{app:red}
Let $N \equiv 0$ hold. Let $R$ be a cut of circuit $N$. We will denote
the circuit between this cut and the output of $N$ as $N_R$ (see
Figure~\ref{fig:cut}). We will say that $N$ is \tb{non-redundant} if
$N_R \not\equiv 0$ for any cut $R$ other than the cut specified by
primary inputs of $N$. Note that if $N_{R} \not\equiv 0$ for some cut
$R$, then $N_{R'} \not\equiv 0$ for \ti{every} cut $R'$ located above
$R$.

Definition~\ref{def:cts} of a CTS may not work well if $N$ is highly
redundant.  Assume, for instance, that $N_R \equiv 0$ holds for a cut
$R$. This means that the clauses specifying gates of $N$ below $R$
(i.e. those that are not in $N_R$) are redundant in $F_N \wedge
z$. Then one can build an SSA $P$ for $F_N \wedge z$ as follows.  Let
$P_R$ be an SSA for $F_{N_R} \wedge z$.  Let \pnt{v} be an arbitrary
assignment to the variables of $\V{N} \setminus \V{N_R}$.  Then by
adding \pnt{v} to every assignment of $P_R$ one obtains an SSA for
$F_N \wedge z$. This means that for any test \pnt{x}, \cube{x}
contains an SSA of $F_N \wedge z$. Therefore, according to
Definition~\ref{def:cts}, circuit $N$ has a CTS consisting of just one
test.

\setlength{\intextsep}{4pt}
\begin{wrapfigure}{L}{1.3in}
%\begin{figure} 
% \begin{center}
%\includegraphics[width=1.3in,height=1.5in]{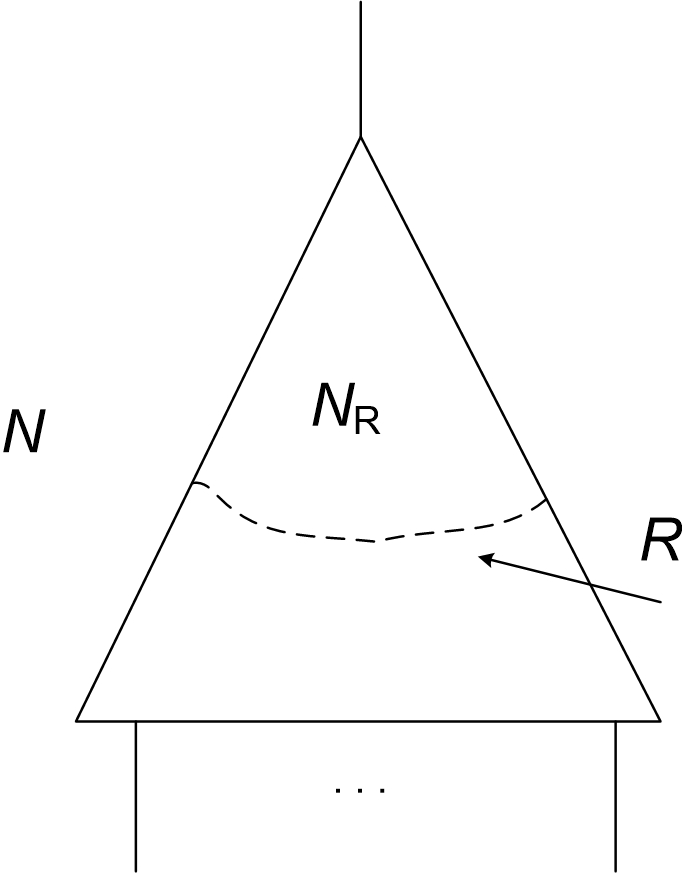}
\includegraphics[width=1in]{c3ut.jpg}
%  \end{center}
\vspace{-10pt}
\caption{A cut $R$ in circuit $N$}
%\vspace{-10pt}
\label{fig:cut}
%\end{figure}
\end{wrapfigure}

The problem above can be solved using the following observation. Let
$T$ be a set of tests \s{\ppnt{x}{1},\dots,\ppnt{x}{k}} for $N$ where
$k \leq 2^{|X|}$. Denote by $\vec{r}_i$ the assignment to the
variables of cut $R$ produced by $N$ under input \ppnt{x}{i}. Let
$T_R$ denote \s{\ppnt{r}{1},\dots, \ppnt{r}{k}}. Denote by $T^*_R$ the
set of assignments to variables of $R$ that cannot be produced in $N$
by any input assignment.  Now assume that $T$ is constructed so that
$T_R \cup T^*_R$ is a CTS for circuit $N_R$. This does not change
anything if $N_R$ is itself redundant (i.e. if $N_{R'} \equiv 0$ for
some cut $R'$ that is closer to the output of $N$ than $R$). In this
case, it is still sufficient to use $T$ of one test because $N_R$ has
a CTS of one assignment (in terms of cut $R$).  Assume however, that
$N_R$ is non-redundant. In this case, there is no ``degenerate'' CTS
for $N_R$ and $T$ has to contain at least $|T_R|$ tests.  Assuming
that $T^*_R$ alone is far from being a CTS for $N_R$, a CTS $T$ for
$N$ will consist of many tests.

So, one can modify the definition of CTS for a redundant circuit $N$
as follows.  A test set $T$ is a CTS for $N$ if there is a cut $R$
such that
\begin{itemize}
\item circuit $N_R$ is non-redundant i.e.
   \begin{itemize}
   \item[$\bullet$] $N_R \equiv0$ holds
   \item[$\bullet$] $N_R' \not\equiv 0$ for every cut $R'$ above $R$
   \end{itemize}
\item set $T_R \cup T^*_R$ is a CTS for $N_R$.
\end{itemize}

\end{document}